%% file: main.tex
\PassOptionsToPackage{hyphens}{url}
\documentclass[conference]{IEEEtran}

\usepackage{cite}
\usepackage{graphicx}
\usepackage{tabularx}
\usepackage{booktabs}
\usepackage{amsmath}
\usepackage{amssymb}
\usepackage{xcolor}
\usepackage{hyperref}
\usepackage{amsthm}
\providecommand{\Description}[1]{}
\newtheorem{definition}{Definition}
\newtheorem{assumption}{Assumption}
\newtheorem{theorem}{Theorem}
\newtheorem{corollary}{Corollary}
\newtheorem{proposition}{Proposition}
\newtheorem{remark}{Remark}
\newtheorem{invariant}{Invariant}[section]

\makeatletter
\@ifundefined{assumption}{\newtheorem{assumption}{Assumption}}{}
\@ifundefined{remark}{\newtheorem{remark}{Remark}}{}
\def\@IEEEpubidpullup{6.5\baselineskip}
\makeatother

\begin{document}

\title{When Child Inherits: Modeling and Exploiting Subagent Spawn in Multi-Agent Networks}

\author{\IEEEauthorblockN{Ziwen Cai}
\IEEEauthorblockA{University of Louisiana at Lafayette \\
Lafayette, USA\\
ziwen.cai1@louisiana.edu}
\and
\IEEEauthorblockN{Yihe Zhang}
\IEEEauthorblockA{University of Louisiana at Lafayette\\
Lafayette, USA\\
yihe.zhang@louisiana.edu}
\and
\IEEEauthorblockN{Xiali Hei}
\IEEEauthorblockA{University of Louisiana at Lafayette\\
Lafayette, USA\\
xiali.hei@louisiana.edu}}

\IEEEoverridecommandlockouts

\maketitle

\begin{abstract}
\input{text/Abstract.tex}
\end{abstract}

\IEEEpeerreviewmaketitle

\section{Introduction}
\input{text/Introduction.tex}

\section{Preliminary}
\input{text/Agentic_AI.tex}

\input{text/Formal_Model_of_a_Role-Based_Multi-Agent_Network.tex}

\section{Threat Model}
\label{sec:threat_model}
\input{text/Threat_Model}

\input{text/Security_Analysis}

\section{Experiments}
\label{sec:experiments}
\input{text/Experiment}

\section{Related Work}
\input{text/Related_Work}

\section{Conclusion}
\input{text/Conclusion}

\section{Ethical Considerations}
\input{text/Ethical_Considerations}

\bibliographystyle{IEEEtran}
\bibliography{references}

\end{document}

%% file: text/Abstract.tex
Since the official release of ChatGPT in 2022, large language models (LLMs) have rapidly evolved from chatbot-style interfaces into agentic systems that can delegate work through tools and newly spawned subagents. While these capabilities improve automation and scalability, they also pose new security risks in multi-agent networks.

Existing research has studied how individual LLM-based agents can be compromised through prompt injection, jailbreaking, poisoned retrieval data, or malicious extensions. Less is known about what happens after one agent is compromised inside a multi-agent network. In particular, inherited memory from parent agents can carry malicious instructions, outdated states, or unintended behavioral rules into newly created subagents, allowing a local compromise to spread across agent boundaries. 

In this paper, we model contemporary multi-agent networks through the lens of subagent inheritance. Our analysis shows that current frameworks can violate trust boundaries through insecure memory inheritance, weak resource control, stale post-spawn state, and improper termination authority. We demonstrate these risks in real agent frameworks and propose defenses based on explicit security invariants. Our findings show that inheritance is not merely an implementation detail, but a central component influencing the security of multi-agent systems. 

%% file: text/Introduction.tex
Agent systems have emerged as powerful tools across a wide range of domains \cite{lazer2026survey, sibai2026path}, driving the development of general-purpose multi-agent frameworks that move beyond research prototypes toward practical deployment. However, this rapid adoption has introduced significant security challenges, as frameworks that prioritize capability and ease of use frequently do so without corresponding security measures between agents \cite{shapira2026agentschaos, Gamblin_2026}. Among such frameworks, OpenClaw has emerged as a particularly relevant subject of study, becoming one of the most starred projects on GitHub within only a few months of release and attracting a rapidly growing developer community \cite{openclaw_2026}. Its prominence makes it a representative case for examining whether the security challenges observed across multi-agent frameworks manifest concretely in a widely adopted system. 

Traditional LLM chatbots' safety is primarily defined as preventing toxic or misleading outputs, and has generally received more attention than security \cite{ghosh2025safety}. In that setting, AI systems functioned mainly as advisory tools rather than autonomous actors. In contrast, agentic AI extends LLM use beyond text generation: agents employ LLMs as reasoning engines and execute actions through tools and skills. This transition is consequential because model outputs increasingly drive direct system behavior rather than merely informing human decisions. Consequently, the boundary between safety and security is becoming increasingly indistinct \cite{ghosh2025safety}. Instructions that were previously interpreted as benign prompts can now be operationalized into harmful actions, and this risk is not mitigated as model capabilities continue to advance \cite{lynch2025agenticmisalignmentllmsinsider}.

As a result, the agent security study has also extended in various directions. In ``pre-agent" %\XH{do you mean ``pre-agent"?}
time, when agentic AI's underlying protocols like the Model Context Protocol (MCP) and agent-to-agent (A2A) are not yet fully adopted by the market, the study has already conducted security analyses of those agent bedrocks \cite{hou2025modelcontextprotocolmcp, song2025protocolunveilingattackvectors, louck2025improvinggooglea2aprotocol}. Beyond the protocols, the agent itself still shows vulnerability towards many different attacks; this includes traditional attacks, such as direct prompt injection and jailbreaking, which focus on guiding LLM to do dangerous speech or action, like back in the chatbot time, or the novel attacks targeting the LLM supply chain \cite{das2025security,zou2024poisonedragknowledgecorruptionattacks, schmotz2026skillinjectmeasuringagentvulnerability}.

Building on these protocol-level and single-agent findings, the next challenge is to understand security in multi-agent networks. The agentic AI has the capability to create entities and exchange information in an agent network. When different agents with their roles collaborate, they have shown substantial ability to outperform a single agent \cite{anthropic2025multiagentresearch}. This demonstrates an exciting future of how multi-agent systems can further accomplish those complicated tasks. Nevertheless, this network itself is also not invincible. When multiple agents coordinate, delegate tasks, and exchange intermediate outputs, vulnerabilities are no longer isolated to one model or tool chain; instead, they may propagate across agents and amplify through interaction \cite{raza2025trismagenticaireview}. 

Traditional network security is built on a clear operational model: trained practitioners define and enforce policies through mechanisms such as PKI trust hierarchies, access control lists, and firewalls. Over time, many of these controls have been integrated into operating systems as managed subsystems, allowing non-expert users to obtain baseline protection without substantial operational effort. This model works well when principals are relatively stable, such as human administrators and long-lived services that are provisioned and credentialed before deployment. 

Multi-agent AI systems challenge this assumption. In these systems, agents can be created and terminated continuously at runtime, often by other agents exercising delegated authority, with limited human oversight at each lifecycle transition \cite{openclaw_documentation_2026}. Human-in-the-loop controls can mitigate part of this risk, but they also shift security responsibility to end users, which cannot be assumed with computer security knowledge. As such architectures are deployed more broadly, the absence of a formal model for capability assignment and lifecycle governance creates a wider risk. Without principled constraints on capability propagation across dynamically instantiated agents, policy drift, and cross-agent abuse patterns in multi-agent networks can become an additional attack surface. 

In this paper, we examine the security of multi-agent networks. Our goal is to understand how and why attacks can compromise this novel network system's security after initial access is established. Prior work has empirically demonstrated attack vectors such as indirect prompt injection \cite{schmotz2026skillinjectmeasuringagentvulnerability, owasp_llm} and jailbreaking \cite{BERINI2026104241}. Building on these studies, we model the types of risks that can be exploited in multi-agent settings and provide proof-of-concept (PoC) exploits to demonstrate their practical impact. Our contributions are summarized as follows. To the best of our knowledge, this is the first work to present a systematic analysis of these issues. 
\begin{itemize}
    \item We present a formal model of multi-agent networks grounded in role-based design, providing a principled foundation for systematic security analysis of multi-agent frameworks beyond ad hoc inspection. The model formalizes memory isolation, access control, and termination scope as verifiable invariants, supplying a vocabulary for orchestration-layer invariants that remain underspecified.

    \item We identify and characterize four classes of vulnerabilities in multi-agent network management: unrestricted memory inheritance, absent resource access control, asynchronous memory divergence, and unauthorized cross-agent termination, tracing each to a structural deficiency in the orchestration layer rather than to model-level behavior. We empirically validate every vulnerability through working PoC exploits against a stock OpenClaw deployment, and further confirm them across five additional LLM vendors, establishing that the findings reflect framework-level rather than vendor-level deficiencies. 

    \item We propose targeted defense mechanisms grounded in the same formal model, including an Agent Capability Registry with PDP/PEP-based access mediation, role-scoped memory projection, and a revision-based synchronization protocol. Each defense is mapped directly to the invariant it restores, providing framework designers with actionable guidance independent of any particular LLM provider.  
\end{itemize}

%% file: text/Agentic_AI.tex
\subsection{Agentic AI}
Large language model (LLM)-based agentic systems extend the capabilities of standalone language models by equipping them with persistent memory, planning loops, and the ability to invoke external tools or subagents in order to complete long-horizon tasks \cite{wang2024survey, xi2023rise}.
Unlike a single-turn prompt-response exchange, an agent operates in a
\emph{perception-reasoning-action} loop: it receives structured input 
(user messages, tool results, environment observations), performs
multi-step reasoning, selects an action from a discrete set of tools or skills, and integrates the resulting observations into its context before the next reasoning step~\cite{yao2023react, shinn2023reflexion}.

The internal architecture of a modern LLM agent typically comprises four
functional components as shown in figure~\ref{fig:Agnet AI ARCH}.

\begin{enumerate}
    \item \textbf{Perception (input normalization).}
    Raw input from one or more channels---text, files, API webhooks,
    sensor streams---is normalized into a structured event carrying
    identity, conversation context, and metadata before being passed to
    the reasoning core.

    \item \textbf{Memory and context assembly.}
    In this work, an agent's memory refers to the information retained or retrieved by the agent to support future reasoning and action. Unlike a stateless LLM call, an agent may preserve conversational context, retrieve external knowledge, and store persistent artifacts that influence later decisions. Agents maintain several memory granularities: \emph{in-context memory} (the active conversation window), \emph{external memory} (vector stores or key-value databases retrieved via similarity search), and \emph{workspace memory} (persistent configuration and note files written between sessions)~\cite{wang2024survey}. During each reasoning step, the framework assembles a system prompt from agent instructions, relevant workspace files, and retrieved history before invoking the LLM.

    \item \textbf{Reasoning and planning.}
    Given the assembled context, the LLM produces either a final response
    or a structured directive to invoke a tool.
    Frameworks such as ReAct~\cite{yao2023react} interleave chain-of-thought
    reasoning traces with tool calls, while hierarchical planners decompose
    a high-level goal into a directed acyclic graph of sub-tasks that may
    be delegated to specialized subagents~\cite{shen2023hugginggpt}.

    \item \textbf{Action and tool use.}
    The agent's action space is defined by a registry of available
    \emph{tools}---executable interfaces such as API endpoints, shell
    commands, code interpreters, or MCP server connections.
    Higher-level \emph{skills} compose one or more tools into reusable,
    goal-oriented capabilities and are typically described to the agent via
    natural-language instruction files that are injected into the system
    prompt at runtime~\cite{schick2023toolformer}.
\end{enumerate}

\subsection{Multi-Agent Systems}

Complex deployments extend the single-agent model into \emph{multi-agent systems} (MAS), or also can be referred to as a multi-agent network, in which an orchestrator agent decomposes a task and dispatches sub-tasks to a pool of specialized worker agents, each with its own tool set and system prompt~\cite{xi2023rise, park2023generative}. Communication between agents may be synchronous (direct function calls or structured message passing) or asynchronous (shared message queues or blackboards). 
To prevent race conditions on shared state, most frameworks serialize execution at the session or workspace level, ensuring that concurrent agent instances do not produce conflicting writes to memory or configuration files \cite{bibek_poudel_2026, perazzo_2026}. 

Agent instantiation within a MAS follows two distinct patterns that differ in lifecycle and context scope. A \emph{\textbf{session-based agent}} maintains a persistent, stateful communication window analogous to a subthread in a messaging platform or a session in a network protocol: the agent retains conversation history, user identity, and accumulated context across multiple turns, allowing sustained back-and-forth interaction within that bounded context. Within a single session, a session-based agent may concurrently attach one or more task-oriented subagents, delegating discrete units of work and accumulating their outputs back into the session context across multiple task cycles. A \emph{\textbf{task-oriented agent}}, by contrast, is ephemeral; the orchestrator assigns a discrete unit of work, the agent initializes from a predefined system prompt encoding its role, executes the task either sequentially or in parallel with peer agents, and yields its output back to the orchestrator without preserving inter-task state \cite{agent_zero_2026, hermes_agent_2026}.

This distinction carries security implications: session-based agents accumulate context over time and are therefore susceptible to persistent poisoning across turns, whereas task-oriented agents present a narrower but repeatable attack surface at initialization, where a malicious predefined prompt or injected skill can corrupt every instantiation of that worker role \cite{li2025lesdissonancescrosstoolharvesting}.

\subsection{Skill Ecosystems}

A recurring design pattern across open-source agent frameworks is the separation of \emph{agent instructions} from \emph{skill definitions}. Agent instructions encode global behavioral constraints and permissions, while skill files encode task-specific procedures that are composed dynamically at runtime. This modularity enables community-driven skill marketplaces, analogous to package registries in conventional software, where third parties publish, version, and distribute skill packages that users install into their agent's workspace~\cite{wang2024survey}. Although this model accelerates capability growth, it also introduces a supply-chain attack surface: a malicious or compromised skill file is injected directly into the agent's reasoning context, potentially subverting behavior at inference time \cite{schmotz2026skillinjectmeasuringagentvulnerability, shapira2026agentschaos,song2025protocolunveilingattackvectors}.

%% file: text/Formal_Model_of_a_Role-Based_Multi-Agent_Network.tex
\begin{figure*}[t]
    \centering
    \includegraphics[width=\textwidth]{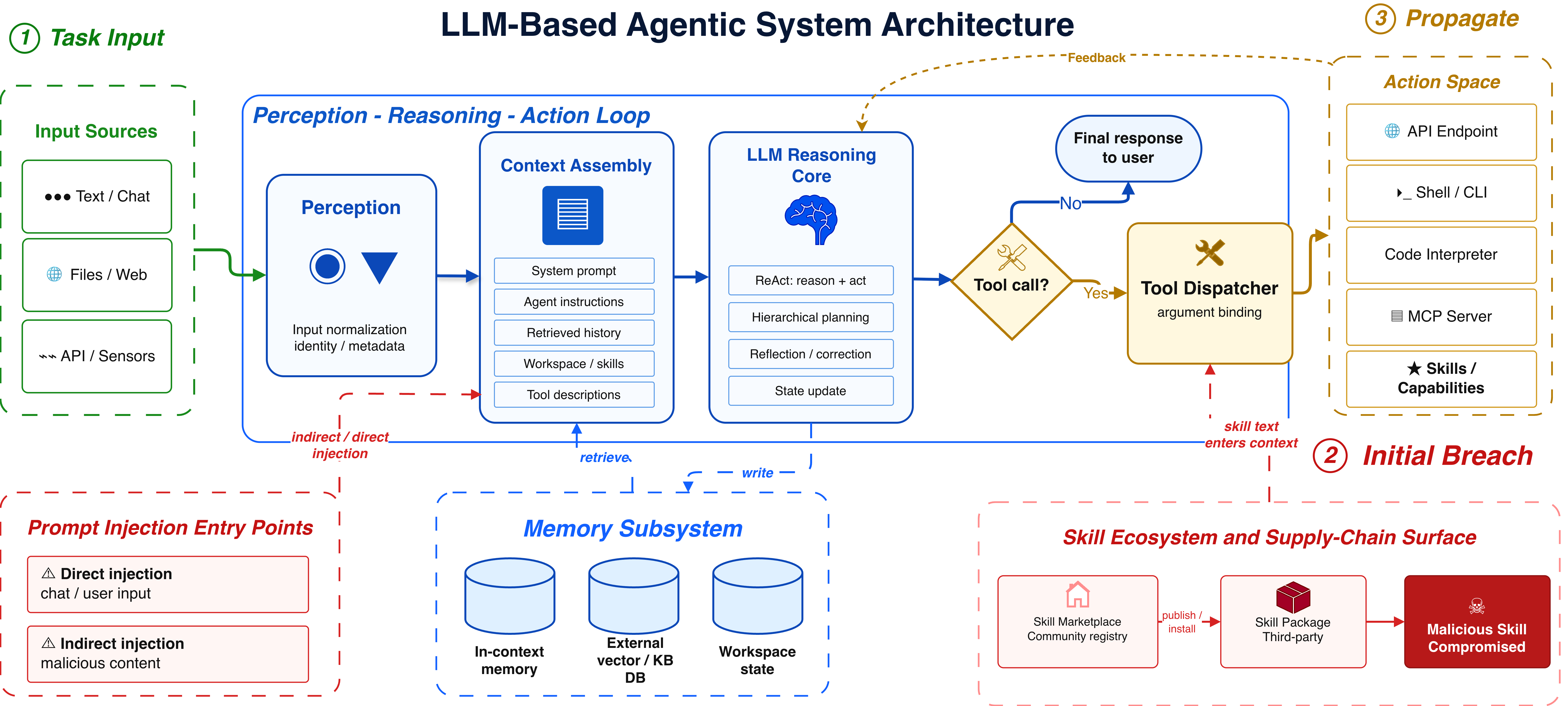}
    \caption{
    Architecture of an LLM-based agentic system operating in a
    \emph{Perception--Reasoning--Action} loop. Highlighted threat surfaces include prompt injection via direct or indirect user input, and supply-chain attacks through malicious third-party skill packages installed into the agent workspace.
    }
    \label{fig:Agnet AI ARCH}
\end{figure*}

\section{Formal Model of a Role-Based Multi-Agent Network}
\label{sec:formal_model}
We model a role-based multi-agent system as a hierarchy consisting of a distinguished main agent and a set of task-specialized subagents. The main agent serves as the root orchestrator and may delegate tasks to subagents, which may recursively spawn additional subagents. Direct user interaction is capability-controlled: it may be available to the main agent, selected subagents, or both.

We adopt this role-based network definition since it reflects current practice in multi-agent system design. Recent frameworks increasingly emphasize role specialization, where users configure agents as task-specific specialists and constrain their behavior through capability-aligned prompts \cite{anthropic2025multiagentresearch,voltagent_2026,googlecloud_agentbuilder_2026}.

At the same time, self-agent creation is becoming common: major frameworks now support either automatic generation of customized subagent prompts from user intent or explicit integration of external agents \cite{agent_zero_2026, openclaw_2026, hermes_agent_2026}. Although these features substantially improve flexibility and scalability, they are often introduced as engineering conveniences rather than under a strict capability-security model. This gap motivates the formal capability and authorization model developed in this section.

\subsection{Network Model}
\label{subsec:network_model}

\begin{definition}[Role-based multi-agent network]
\label{def:role_based_network}
A role-based multi-agent network is a tuple
\[
\mathcal{N} = (A, E, a_0, \mathcal{R}, \rho, \mathcal{K}, \kappa, \mathcal{T}, \tau, \mathcal{M}, \mu, \mathcal{L}, \lambda, \mathcal{I}, \iota)
\]
where:
\begin{itemize}
    \item \(A = \{a_0, a_1, \dots, a_n\}\) is a finite set of agents;

    \item \(a_0 \in A\) is the distinguished \emph{main agent} and root orchestrator;

    \item \(E \subseteq A \times A\) is a set of directed parent-child edges, where \((a_i,a_j)\in E\) means that \(a_i\) is the parent of \(a_j\);

    \item \(\mathcal{R}\) is a set of roles; the mapping \(\rho \colon A \to \mathcal{R}\) assigns a role to each agent;

    \item \(\mathcal{K}\) is a universe of capabilities; the mapping \(\kappa \colon A \to 2^{\mathcal{K}}\) assigns a capability set to each agent;

    \item \(\mathcal{T}\) is the universe of accessible resources (tools, skills, external services); the mapping \(\tau \colon \mathcal{R} \to 2^{\mathcal{T}}\) assigns to each role \(r \in \mathcal{R}\) the set \(\tau(r) \subseteq \mathcal{T}\) of resources authorized for that role;

    \item \(\mathcal{M}=\{\mathsf{inherit\mbox{-}full},\mathsf{inherit\mbox{-}partial},\mathsf{agent\mbox{-}agnostic}\}\) is the set of memory inheritance modes, and \(\mu \colon A \setminus \{a_0\} \to \mathcal{M}\) assigns a memory mode to each subagent;

    \item \(\mathcal{L}=\{\mathsf{one\mbox{-}time},\mathsf{persistent}\}\) is the set of lifespan modes, and \(\lambda \colon A \setminus \{a_0\} \to \mathcal{L}\) assigns a lifespan type to each subagent.

    \item \(\mathcal{I} = \{\mathsf{session\mbox{-}based},\, \mathsf{task\mbox{-}oriented}\}\) is the set of interaction modes; the mapping \(\iota \colon A \to \mathcal{I}\) assigns an interaction mode to each agent.

\end{itemize}
\end{definition}

\begin{assumption}[Unique parent]
\label{ass:unique_parent}
Each subagent has exactly one parent:
\[
\forall a_j \in A \setminus \{a_0\}, \qquad \exists! \, a_i \in A \text{ such that } (a_i,a_j)\in E.
\] 
\end{assumption}
We assume so, since sessions are naturally bounded within a conversation window (e.g., a WhatsApp chat). Thus, cross-parent usage is impractical, as information is isolated within each window.

\begin{assumption}[Role specialization]
\label{ass:role_specialization}
Each subagent \(a \in A \setminus \{a_0\}\) is assigned a task-specific role \(\rho(a)\), such as planning, retrieval, monitoring, tool use, or verification.
\end{assumption}

\subsection{Capabilities and Agent Creation}
\label{subsec:cap_creation}

\begin{definition}[Capabilities]
\label{def:capabilities}
For each agent \(a \in A\), \(\kappa(a) \subseteq \mathcal{K}\) denotes its capabilities. Relevant capabilities include:
\begin{itemize}
    \item \(\mathsf{spawn}\): create a child subagent;
    \item \(\mathsf{kill}\): terminate a child subagent;
    \item \(\mathsf{delegate}\): assign a task to a child;
    \item \(\mathsf{access\mbox{-}memory}\): access inherited or local memory;
    \item \(\mathsf{communicate}\): exchange messages with authorized agents;
    \item \(\mathsf{user\mbox{-}interact}\): directly communicate with the user.
\end{itemize}
\end{definition}

\begin{assumption}[User interaction permission]
\label{ass:user_interaction_permission}
An agent may directly interact with the user only if
\[
\mathsf{user\mbox{-}interact} \in \kappa(a).
\]
\end{assumption}

\begin{definition}[Subagent creation]
\label{def:subagent_creation}
If \(a \in A\) satisfies \(\mathsf{spawn}\in \kappa(a)\), then \(a\) may create a child agent \(b \notin A\), updating
\[
A \leftarrow A \cup \{b\}, \qquad E \leftarrow E \cup \{(a,b)\}.
\]
The created subagent \(b\) is initialized with a role \(\rho(b)\), a capability set \(\kappa(b)\), a memory mode \(\mu(b)\), and a lifespan mode \(\lambda(b)\). The resource access of \(b\) is determined by its role through \(\tau(\rho(b))\).
\end{definition}

\subsection{Memory and Lifespan}
\label{subsec:memory_lifespan}

\begin{definition}[Memory inheritance]
\label{def:memory_inheritance}
For each subagent \(b \in A \setminus \{a_0\}\), \(\mu(b)\) specifies its memory initialization:
\begin{itemize}
    \item \(\mathsf{inherit\mbox{-}full}\): inherit the full parent memory;
    \item \(\mathsf{inherit\mbox{-}partial}\): inherit a selected subset of parent memory;
    \item \(\mathsf{agent\mbox{-}agnostic}\): inherit no parent memory.
\end{itemize}
\end{definition} 

\begin{definition}[Lifespan]
\label{def:lifespan_mode}
For each subagent \(b \in A \setminus \{a_0\}\), \(\lambda(b)\) specifies whether it is
\begin{itemize}
    \item \(\mathsf{one\mbox{-}time}\): created for a single or bounded task; or
    \item \(\mathsf{persistent}\): active across time for long-running responsibilities.
\end{itemize}
\end{definition}

\begin{definition}[Interaction mode]
\label{def:interaction_mode}
For each agent \(a \in A\), \(\iota(a)\) specifies its communication lifecycle:
\begin{itemize}
    \item \(\mathsf{session\mbox{-}based}\): the agent maintains a persistent, stateful context window across multiple turns with a designated principal (user or parent agent), preserving conversation history and accumulated context between interactions. This mode implies \(\lambda(a) = \mathsf{persistent}\) and typically requires \(\mathsf{user\mbox{-}interact} \in \kappa(a)\) when the principal is the end user.

    \item \(\mathsf{task\mbox{-}oriented}\): the agent is initialized from a predefined system prompt for a discrete unit of work, executes its assigned task either sequentially or in parallel with peer agents, and yields its output to the delegating parent without retaining inter-task state. This mode does not require \(\mathsf{user\mbox{-}interact}\) and is compatible with either lifespan mode.  
\end{itemize}
\end{definition}

\begin{remark}[Session-to-task fan-out]
\label{rem:session_task_fanout}
A session-based agent may concurrently attach one or more task-oriented subagents within the scope of a single session. Formally, if \(\iota(a) = \mathsf{session\mbox{-}based}\) and \(\mathsf{spawn} \in \kappa(a)\), then \(a\) may create a set of children \(B = \{b_1, \dots, b_k\} \subseteq A\) such that \(\iota(b_i) = \mathsf{task\mbox{-}oriented}\) for all \(b_i \in B\). These subagents may execute sequentially or in parallel, each initialized from its own predefined system prompt, and each yields its output back to \(a\) upon completion. The session context of \(a\) persists across all such delegations, accumulating the results of multiple task cycles within a single interaction window.
\end{remark}

\subsection{Control Structure}
\label{subsec:control_structure}

\begin{definition}[Descendant relation]
\label{def:descendant_relation}
For agents \(a,b \in A\), we write \(a \prec b\) if there exists a directed path from \(a\) to \(b\) in \((A,E)\). In this case, \(b\) is a descendant of \(a\).
\end{definition}

\begin{definition}[Default termination authority]
\label{def:termination_authority}
By default, an agent \(a \in A\) with \(\mathsf{kill}\in\kappa(a)\) is authorized to terminate agent \(b\) only if \(b\) is a direct child of \(a\), or if \(a\) is the main agent \(a_0\):
\[
\mathrm{Terminate}(a,b) \Rightarrow \mathsf{kill}\in\kappa(a) \;\wedge\; \big((a,b)\in E \;\vee\; a = a_0\big).
\]
That is, the direct parent and the root orchestrator may terminate any of their descendants. This default policy may be overridden by explicit per-agent authorization.
\end{definition}

\begin{remark}[Implementation deviations]
\label{rem:implementation_deviation}
The formal model captures the intended authorization policy. Practical systems may violate this policy due to flawed capability assignment or access control. We analyze one such case in Section~\ref{sec:security_analysis}, where a spawned subagent can terminate a sibling agent.
\end{remark}

\begin{theorem}[Hierarchical structure]
\label{thm:hierarchical_arborescence}
Under Assumption~\ref{ass:unique_parent}, the directed graph \((A,E)\) is a rooted arborescence with root \(a_0\), provided every agent is reachable from \(a_0\).
\end{theorem}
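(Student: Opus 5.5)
The plan is to check the standard characterization of a rooted arborescence directly: a directed graph is an arborescence rooted at \(a_0\) if and only if (i) \(a_0\) has in-degree zero, (ii) every other vertex has in-degree exactly one, and (iii) every vertex is reachable from \(a_0\). Equivalently, for every \(v\in A\) there is a \emph{unique} directed path from \(a_0\) to \(v\). Property (ii) is exactly Assumption~\ref{ass:unique_parent}, and (iii) is the hypothesis of Theorem~\ref{thm:hierarchical_arborescence}. So most of the work is establishing (i) and then deriving acyclicity and uniqueness of paths.

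For (i), Assumption~\ref{ass:unique_parent} constrains only subagents and says nothing about edges into \(a_0\). I would therefore argue from the dynamics of Definition~\ref{def:subagent_creation}. I take the initial network to be \(A=\{a_0\}\), \(E=\emptyset\). The only operation that adds an edge is spawning, and it adds \((a,b)\) with \(b\notin A\), a fresh agent. Hence no edge ever points to an agent that already existed. Since \(a_0\) exists from the start, no edge ever enters \(a_0\).

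The same induction gives more. Every new edge ends at a fresh vertex, so every reachable configuration is built by attaching one new leaf at a time. A single leaf attachment preserves the arborescence property, so this alone would already settle the theorem inductively. If instead one wants a purely static argument from the stated hypotheses, (i) must be taken as part of the model. This is the step I expect to need the most care, because the theorem statement does not explicitly list it.

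Given (i)--(iii), I would prove acyclicity by contradiction. Suppose \(C\) is a directed cycle. It cannot contain \(a_0\), since \(a_0\) has no incoming edge. Pick \(v\in C\) and a path \(P\) from \(a_0\) to \(v\), which exists by (iii). Let \(w\) be the first vertex of \(P\) lying on \(C\); then \(w\neq a_0\). The predecessor of \(w\) on \(P\) lies outside \(C\), while the predecessor of \(w\) on \(C\) lies inside \(C\). These are two distinct parents of \(w\), contradicting Assumption~\ref{ass:unique_parent}.

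Uniqueness of root paths then follows by following parent pointers backward from any \(v\). Each non-root vertex has exactly one parent, so the backward walk is determined. It cannot revisit a vertex, since that would give a cycle, and it can stop only at the parentless vertex \(a_0\). Any \(a_0\)-to-\(v\) path, read in reverse, must coincide with this walk. This yields the arborescence structure, and with it the fact that \(\prec\) is a strict partial order whose minimum is \(a_0\).
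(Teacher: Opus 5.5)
Your proof is correct. It uses the same in-degree characterization as the paper, but it is substantially more complete.

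The paper's proof is two lines. Non-root agents have exactly one parent by Assumption~\ref{ass:unique_parent}, \(a_0\) is ``the distinguished root,'' every agent is reachable, hence an arborescence. That leaves two things implicit: that \(a_0\) has no parent, and that \((A,E)\) is acyclic. The first does not follow from the stated hypotheses. For example, \(E=\{(a_0,a_1),(a_1,a_0)\}\) satisfies Assumption~\ref{ass:unique_parent} and reachability, yet it contains a cycle through \(a_0\), so \(a_0\) would be its own descendant.

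You correctly isolate this missing condition. You supply it either from the freshness requirement \(b\notin A\) in Definition~\ref{def:subagent_creation}, or as an explicit modeling assumption; the latter is what the paper's phrase ``distinguished root'' silently does. Your first-entry-into-the-cycle argument for acyclicity and your backward parent-pointer walk for path uniqueness are both sound. Together they also show that \(\prec\) is a strict partial order, which is what makes the descendant relation of Definition~\ref{def:descendant_relation} behave as intended. The paper's version buys only brevity, by treating the result as essentially definitional.

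One minor caution about your inductive shortcut. The paper never formalizes how termination updates \((A,E)\), so the leaf-attachment induction covers only spawn-only histories. Property (i) survives any edge-deleting operation, but removing an internal agent can orphan its children. The static argument from (i)--(iii) is therefore the one to rely on.
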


\begin{proof}
Each non-root agent has exactly one parent by Assumption~\ref{ass:unique_parent}. Since \(a_0\) is the distinguished root and every node is reachable from it, \((A,E)\) forms a rooted arborescence.
\end{proof}

\subsection{Security Interpretation}
\label{subsec:security_interpretation}

For security analysis, we associate each agent \(a\in A\) with a local state
\[
s(a) = \big(\rho(a),\, \kappa(a),\, m(a),\, \lambda(a),\, \iota(a)\big),
\]
where \(m(a)\) denotes the agent's accessible memory or context. This model defines the key trust boundaries of the system: delegation, memory inheritance, capability assignment, resource access, user interaction, and termination authority. We now formally state the security invariants that a correct implementation must preserve.

\begin{invariant}[Termination scope]
\label{inv:termination_scope}
An agent \(a\) may terminate agent \(b\) only under the default termination authority (Definition~\ref{def:termination_authority}):
\[
\mathrm{Terminate}(a,b) \Rightarrow \mathsf{kill}\in\kappa(a) \;\wedge\; \big((a,b)\in E \;\vee\; a = a_0\big).
\]
A violation of this invariant constitutes an \emph{unauthorized termination}, indicating a breach of control scope.
\end{invariant}

\begin{invariant}[Memory isolation]
\label{inv:memory_isolation}
A subagent \(b\) with parent \(a\) may only access memory consistent with its inheritance mode:
\[
m(b) \subseteq
\begin{cases}
m(a) & \text{if } \mu(b) = \mathsf{inherit\mbox{-}full}, \\
\sigma(a,b) & \text{if } \mu(b) = \mathsf{inherit\mbox{-}partial}, \\
\emptyset & \text{if } \mu(b) = \mathsf{agent\mbox{-}agnostic},
\end{cases}
\]
where \(\sigma(a,b) \subseteq m(a)\) is a parent-specified selection function. A violation of this invariant constitutes an \emph{improper inheritance}, indicating a breach of information-flow boundaries.
\end{invariant}

\begin{invariant}[Resource access control]
\label{inv:access_control}
For each subagent \(b \in A \setminus \{a_0\}\) with assigned role \(\rho(b)\), the effective resource access of \(b\) must be bounded by the role-to-resource mapping:
\[
\mathrm{Access}(b) \subseteq \tau(\rho(b)).
\]
A violation of this invariant constitutes a \emph{no access control} condition, in which a subagent is granted unrestricted access to resources beyond those required by its task, expanding the attack surface available to a compromised or misbehaving agent.
\end{invariant}

\noindent
Framing failures in terms of these invariants allows us to diagnose implementation bugs as concrete violations of authority, isolation, and resource scoping. For example, unauthorized termination violates Invariant~\ref{inv:termination_scope}, improper context inheritance violates Invariant~\ref{inv:memory_isolation}, and unrestricted resource access violates Invariant~\ref{inv:access_control}. We analyze such violations in the context of a real system in Section~\ref{sec:security_analysis}.

\subsection{Limitations on the Model}
\label{subsec:model_limitations}

The formal model presented above is scoped to capture the structural and authorization properties most relevant to the security analysis in this paper. We acknowledge several limitations.

First, the model focuses on agent \emph{creation and composition}---how agents are spawned, assigned roles, granted capabilities, and organized into a hierarchy---rather than on agent \emph{execution semantics}. Properties such as whether an agent operates as a frontend or backend service, the concurrency model governing agent execution, or the communication protocol used between agents are not formalized. These operational dimensions may introduce additional attack surfaces (e.g., race conditions in concurrent delegation, or insecure inter-agent transport), but they fall outside the scope of our study, which targets the structural gap between intended authorization policy and actual implementation.

Second, the model is inherently \emph{role-based}: resource access and behavioral expectations are mediated through role assignments. This design choice reflects current practice in major multi-agent frameworks, but it does not capture architectures where agents are undifferentiated (i.e., all agents share a uniform capability set) or where access control is identity-based rather than role-based. Extending the model to attribute-based or identity-based access control paradigms remains future work.

%% file: text/Threat_Model.tex
\begin{figure*}[t]
    \centering
    \includegraphics[width=\textwidth]{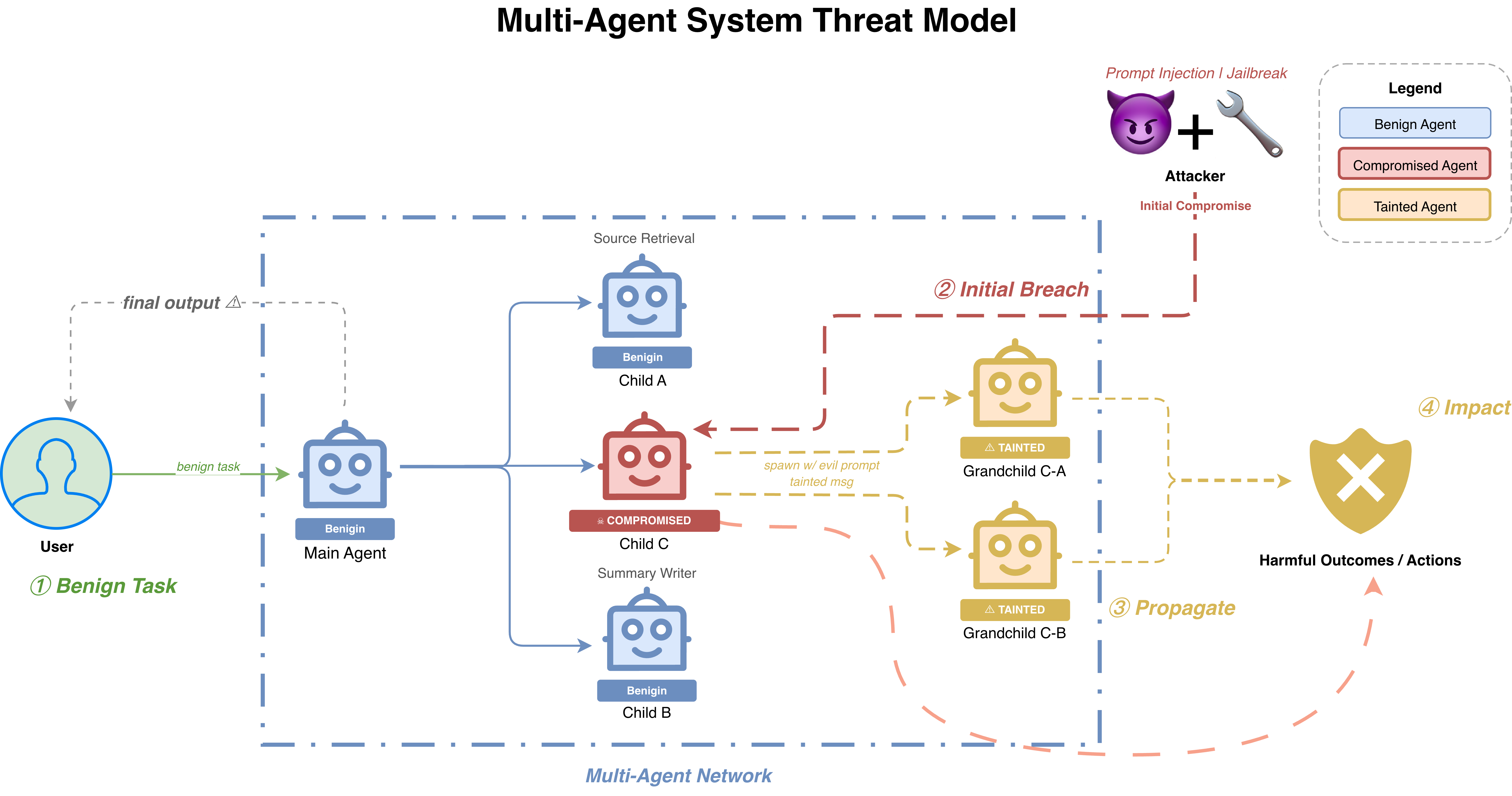}
    \caption{Threat model of the multi-agent system. The adversary compromises one agent via prompt injection or jailbreaking and propagates malicious influence through inter-agent communication and delegation chains, ultimately affecting downstream actions and future agent creation.}
    \Description{Threat model diagram showing a user sending a benign task to a multi-agent system, one agent being compromised by prompt injection or jailbreaking, and malicious influence propagating through inter-agent communication and delegation to downstream agents, resulting in harmful user-side actions and poisoned future agent creation.}
    \label{fig:threat_model}
\end{figure*}

Our threat model assumes one agent is already compromised and focuses on how harm propagates through the multi-agent network. Since initial compromise vectors such as prompt injection and jailbreaking are well studied~\cite{shapira2026agentschaos, das2025security, BERINI2026104241, owasp_llm}, we analyze the less-understood question of whether a single foothold can cross agent boundaries through flawed trust, delegation, and resource-sharing mechanisms.

\textbf{Adversary Goal.} The adversary seeks to compromise the integrity and control flow of a multi-agent system, with the objective of causing harmful actions on the user side and influencing future subagents' creation and coordination.

\textbf{Adversary Capabilities.} We assume the adversary can compromise at least one agent instance through known single-agent attack vectors, such as prompt injection and jailbreaking. Common scenarios include an email assistant subagent or a group assistant agent managing a chat group, both of which allow an attacker to interact with the agent through normal communication channels. From that foothold, the adversary can inject malicious instructions or backdoor logic into agent outputs, intermediate messages, or agent-creation prompts, and then use the compromised agent as a pivot to influence downstream agents through delegation chains and inter-agent communication.

\textbf{Assumptions on User and Environment.} Users are assumed to submit benign tasks, but they may not be able to verify every intermediate response produced by all agents because not all intermediate information is directly related to the user-submitted task. For example, when a user requests a literature review, one agent may retrieve sources, another may rank or filter them, and a third may draft summaries; the user can validate the final output, but typically cannot inspect each intermediate decision in detail. We further assume that the base infrastructure (e.g., hosting server and local computer network) is not directly compromised at the beginning of the attack, and that the attack starts at the application/agent layer rather than through privileged system-level access.

\textbf{Out of Scope.} This threat model does not cover direct compromise of the hosting server, operating system, or underlying corporate network. It also excludes physical attacks and other non-agent attack channels outside the multi-agent execution workflow.

%% file: text/Security_Analysis.tex
\section{Security Analysis}
\label{sec:security_analysis}

We evaluated three open-source multi-agent frameworks against our model: OpenClaw \cite{openclaw_2026}, Hermes \cite{hermes_agent_2026}, and Agent Zero \cite{agent_zero_2026}. OpenClaw served as the primary subject because it is the most starred open-source personal agent on GitHub, was among the first personal agent frameworks to gain widespread adoption, and its architectural design has been borrowed by a number of subsequent projects. Hermes and Agent Zero were selected as comparison points because they adopt task-oriented-only subagent models, allowing us to assess how architectural choices alter the attack surface.

Through systematic testing across these three frameworks, we identified four structural vulnerabilities, each corresponding to a violation of a security invariant defined in Section~\ref{sec:formal_model}. The vulnerabilities are not uniformly present in every framework: some are exhibited by all three, while others are eliminated by construction in frameworks that omit certain features. We first define the threat model, then formalize each finding at the model level. Per-framework applicability and PoC are deferred to Section~\ref{subsec:ablation}.

\subsection{Initial Access}
\label{subsec: Initial_Access}

We consider an attacker who does not have direct access to the agent's configuration or infrastructure, but is able to inject malicious content into the agent's operating context through normal interaction channels. Concretely, we model the attack as follows.

\begin{definition}[Adversarial injection]
\label{def:adversarial_injection}
Let \(\mathcal{N}\) be a role-based multi-agent network, and let \(m(a)\) denote the accessible memory of agent \(a \in A\). An adversarial injection is the introduction of a malicious payload \(\epsilon\) into the memory of some agent \(a_i \in A\), such that
\[
m(a_i) \leftarrow m(a_i) \cup \{\epsilon\}.
\]
The payload \(\epsilon\) may be introduced through any input surface available during normal interaction, including user messages or other possible external sources. Once embedded, \(\epsilon\) may cause \(a_i\), or any agent that inherits the contaminated memory, to deviate from its intended behavior.
\end{definition}

\noindent
This threat model, as mentioned in section~\ref{sec:threat_model}, is deliberately minimal: the attacker requires only the ability to place content into a channel that an agent reads. It captures a broad class of real-world injection vectors, including indirect prompt injection via retrieved content, poisoned tool outputs, and adversarial user input. The severity of the injection depends on how the contaminated memory propagates through the network, which is precisely what the vulnerabilities below expose.

\subsection{Unrestricted Memory Inheritance}
\label{subsec:vuln_memory}

When a parent agent spawns a subagent, the child may be initialized with a replication of the parent's full session context. In OpenClaw, \textnormal{\texttt{sessions\_spawn}} injects \texttt{AGENTS.md} and \texttt{TOOLS.md} into the child while additively merging the parent's authentication profiles as a fallback; in practice, the spawned subagent receives the parent's complete operational state, including chat history and tool configurations.

\begin{proposition}[Memory isolation violation]
\label{prop:memory_violation}
Let \(a \in A\) be a parent agent that spawns a child \(b\) in a framework that performs full context replication at spawn. Regardless of the intended memory mode \(\mu(b)\), the effective memory of \(b\) satisfies
\[
m(b) = m(a),
\]
i.e., the child receives the full parent memory. This constitutes a violation of Invariant~\ref{inv:memory_isolation} whenever \(\mu(b) \neq \mathsf{inherit\mbox{-}full}\), and in particular when the child's task requires only a narrow slice of the parent's context.
\end{proposition}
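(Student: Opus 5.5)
The argument is essentially definitional: unfold what ``full context replication at spawn'' means in the model, then check each case of Invariant~\ref{inv:memory_isolation}. First, I formalize the hypothesis. A framework performs full context replication if its spawn operation (Definition~\ref{def:subagent_creation}) initializes the child with \(m(b) \leftarrow m(a)\), taking the parent's memory at spawn time. It does not consult \(\mu(b)\). So \(\mu(b)\) is recorded in the tuple \(\mathcal{N}\) but never read by the initialization routine. This yields \(m(b) = m(a)\) immediately, independent of \(\mu(b)\), which is the first claim. I would note that the equality holds at the spawn instant; post-spawn divergence is a separate issue treated later, so the statement is read as being about initialization.

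Second, I split into the three cases of \(\mathcal{M}\).

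\emph{Case \(\mu(b) = \mathsf{inherit\mbox{-}full}\).} The invariant requires \(m(b) \subseteq m(a)\), which holds with equality, so there is no violation.

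\emph{Case \(\mu(b) = \mathsf{agent\mbox{-}agnostic}\).} The invariant requires \(m(b) \subseteq \emptyset\). This fails as soon as \(m(a) \neq \emptyset\). That is always the case for a spawning parent in practice: its memory contains at least the delegation context, or, in OpenClaw, \texttt{AGENTS.md}, \texttt{TOOLS.md} and the authentication profiles.

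\emph{Case \(\mu(b) = \mathsf{inherit\mbox{-}partial}\).} The invariant requires \(m(a) \subseteq \sigma(a,b)\). Since \(\sigma(a,b) \subseteq m(a)\), this holds iff \(\sigma(a,b) = m(a)\). A violation therefore occurs exactly when the selection is a proper subset. The ``narrow slice'' situation, where the task needs only a small \(\sigma(a,b) \subsetneq m(a)\), is precisely this case.

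The main obstacle is the degenerate edge cases, where the statement as literally phrased (``whenever \(\mu(b)\neq\mathsf{inherit\mbox{-}full}\)'') is too strong. These are \(m(a)=\emptyset\), or a partial selection that happens to equal all of \(m(a)\). I would handle them by stating the nontriviality conditions explicitly: \(m(a)\neq\emptyset\) for the agnostic case and \(\sigma(a,b)\subsetneq m(a)\) for the partial case. I would then argue that both hold in any meaningful spawn. I would close by remarking that, combined with Definition~\ref{def:adversarial_injection}, any payload \(\epsilon \in m(a)\) lies in \(m(b)\), which motivates the propagation discussion that follows.
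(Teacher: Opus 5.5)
Your proposal is correct and takes the same definitional route as the paper. The paper gives no separate proof: it treats $m(b)=m(a)$ as what full replication at spawn means, reads the violation directly off Invariant~\ref{inv:memory_isolation}, and adds only the remark that $\epsilon\in m(a)$ then forces $\epsilon\in m(b)$. Your case split is sharper than the paper's wording, because the literal claim ``whenever $\mu(b)\neq\mathsf{inherit\mbox{-}full}$'' needs your side conditions $m(a)\neq\emptyset$ (agent-agnostic case) and $\sigma(a,b)\subsetneq m(a)$ (partial case); without them the required inclusions hold trivially.
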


\noindent
The security consequence is direct: if the parent's memory has been contaminated by an adversarial injection \(\epsilon \in m(a)\), then full replication guarantees \(\epsilon \in m(b)\). Every session-based subagent spawned from a compromised parent inherits the embedded payload unconditionally. Task-oriented subagents inherit \(\epsilon\) only if the contaminated context slice is included in the predefined system prompt passed at initialization.

\begin{corollary}[Transitive contamination]
\label{cor:transitive_contamination}
If \(\epsilon \in m(a_0)\) and every spawn operation performs full memory replication, then for all agents \(b\) such that \(a_0 \prec b\),
\[
\epsilon \in m(b).
\]
That is, a single injection into the root agent's context propagates to every descendant in the network.
\end{corollary}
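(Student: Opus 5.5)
The plan is an induction on the length of the directed path from \(a_0\) to \(b\) in \((A,E)\), with Proposition~\ref{prop:memory_violation} supplying the inductive step. By Definition~\ref{def:descendant_relation}, \(a_0 \prec b\) means there is a path \(a_0 = c_0, c_1, \dots, c_k = b\) with \((c_{i-1}, c_i) \in E\) and \(k \ge 1\). I will show \(\epsilon \in m(c_i)\) for every \(i\).

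The base case \(i = 0\) is the hypothesis \(\epsilon \in m(a_0)\). For the inductive step, suppose \(\epsilon \in m(c_{i-1})\). The edge \((c_{i-1}, c_i)\) exists only because \(c_{i-1}\) spawned \(c_i\) (Definition~\ref{def:subagent_creation}). By assumption every spawn performs full replication, so Proposition~\ref{prop:memory_violation} gives \(m(c_i) = m(c_{i-1})\), and hence \(\epsilon \in m(c_i)\). Taking \(i = k\) yields \(\epsilon \in m(b)\).

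Theorem~\ref{thm:hierarchical_arborescence} is not needed for the implication itself: any path works. It only guarantees that, under Assumption~\ref{ass:unique_parent}, the path is unique, so the claim ranges over a well-defined descendant set.

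The main obstacle is temporal rather than combinatorial. Proposition~\ref{prop:memory_violation} describes the child's memory at spawn time. The argument therefore needs \(\epsilon\) to be present in \(m(c_{i-1})\) at the moment \(c_{i-1}\) spawns \(c_i\), and it needs \(\epsilon\) not to be removed afterwards. I would handle this by fixing an explicit reading of the hypothesis: \(\epsilon \in m(a_0)\) holds before any descendant is created, and memory only grows or is copied, never pruned, so contamination is monotone along each edge. I would state this reading explicitly rather than leave it implicit, since a later injection into \(a_0\) would not reach already-spawned children. That case is exactly the stale-state divergence the paper treats separately.
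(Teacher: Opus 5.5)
Your induction along the path \(a_0 = c_0, \dots, c_k = b\) is correct and is the argument the paper leaves implicit: you apply Proposition~\ref{prop:memory_violation} at each edge, and the paper states the corollary as an immediate consequence of that proposition without giving a separate proof. Your explicit temporal reading is a sensible strengthening of the informal hypothesis, namely that \(\epsilon\) is present before the relevant spawns and is never pruned, and Definition~\ref{def:memory_divergence} shows it is genuinely needed, since an injection into \(a_0\) after a child is spawned does not reach that child.
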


\begin{remark}[Scope of transitive contamination]
\label{rem:contamination_scope}
Corollary~\ref{cor:transitive_contamination} holds unconditionally for session-based descendants, which receive the full parent context at spawn. For task-oriented descendants, propagation is conditioned on whether the contaminated segment falls within the predefined system prompt composition passed at initialization. If \(\epsilon\) is confined to a context slice excluded from the child's prompt, the payload does not propagate to that child. The corollary therefore represents a worst-case bound; task-oriented agents with narrowly scoped initialization prompts provide a partial natural barrier against transitive contamination.
\end{remark}

\subsection{Absence of Resource Access Control}
\label{subsec:vuln_access_control}

A framework violates Invariant~\ref{inv:access_control} when it exposes a tool surface to subagents without enforcing a role-to-resource mapping at invocation. OpenClaw exemplifies the most permissive form: its tool policy grants subagents access to \emph{all tools except session tools} by default, with subagents receiving the full tool surface minus \texttt{sessions\_list}, \texttt{sessions\_history}, \texttt{sessions\_send}, and \texttt{sessions\_spawn}. This includes shell access via the \texttt{exec} tool, file system operations, web browsing capabilities, and all installed plugins.

\begin{proposition}[Resource access control violation]
\label{prop:access_control_violation}
Let \(b \in A \setminus \{a_0\}\) be a subagent with assigned role \(\rho(b)\) and let \(\tau(\rho(b))\) denote the set of resources necessary for that role. In any framework that does not enforce a role-to-resource mapping at invocation, the effective resource access of \(b\) satisfies
\[
\mathrm{Access}(b) \supseteq \mathcal{T}_{\mathsf{exposed}},
\]
where \(\mathcal{T}_{\mathsf{exposed}} \subseteq \mathcal{T}\) is the framework-determined surface exposed to subagents. For any role \(\rho(b)\) whose task-necessary resources are a strict subset of \(\mathcal{T}_{\mathsf{exposed}}\), this violates Invariant~\ref{inv:access_control}:
\[
\mathrm{Access}(b) \not\subseteq \tau(\rho(b)).
\]
\end{proposition}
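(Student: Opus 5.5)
The argument has two parts: first the containment $\mathrm{Access}(b) \supseteq \mathcal{T}_{\mathsf{exposed}}$, then the failure of Invariant~\ref{inv:access_control} via a witness resource.

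\textbf{Step 1: the containment.} I would make precise what ``does not enforce a role-to-resource mapping at invocation'' means operationally. Let $\mathrm{Check}(b,t)$ be the predicate the framework evaluates when $b$ invokes $t \in \mathcal{T}$. The effective access is
\[
\mathrm{Access}(b) = \{t \in \mathcal{T} : t \text{ is exposed to } b \wedge \mathrm{Check}(b,t)\}.
\]
Non-enforcement means $\mathrm{Check}(b,t)$ does not depend on $\rho(b)$ or $\tau$. In the permissive case of interest, such as OpenClaw's default policy, it is identically true on the exposed surface. Then every $t \in \mathcal{T}_{\mathsf{exposed}}$ is invocable by $b$, which gives $\mathrm{Access}(b) \supseteq \mathcal{T}_{\mathsf{exposed}}$. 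I would state this as the working reading of the hypothesis, since the proposition is a modeling claim rather than a derivation from deeper axioms. For OpenClaw I would instantiate $\mathcal{T}_{\mathsf{exposed}}$ as $\mathcal{T}$ minus the four session tools.

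\textbf{Step 2: the witness.} Assume $\tau(\rho(b)) \subsetneq \mathcal{T}_{\mathsf{exposed}}$. Strictness gives some $t^\ast \in \mathcal{T}_{\mathsf{exposed}} \setminus \tau(\rho(b))$, for example \texttt{exec} for a retrieval-only role. By Step 1, $t^\ast \in \mathrm{Access}(b)$ while $t^\ast \notin \tau(\rho(b))$. Hence $\mathrm{Access}(b) \not\subseteq \tau(\rho(b))$, which contradicts the inclusion required by Invariant~\ref{inv:access_control}.

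\textbf{Main obstacle.} The only nontrivial step is Step 1: the hypothesis must be formalized tightly enough to yield the containment. A framework could filter tools by some criterion unrelated to role, in which case only a subset of the exposed surface would be granted. I would handle this by defining $\mathcal{T}_{\mathsf{exposed}}$ as the surface that remains \emph{after} any role-independent filtering. The containment is then definitional, and the substantive content sits entirely in the strict-subset witness argument of Step 2.
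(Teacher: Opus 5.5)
Your proposal is correct and follows the paper's own reasoning: the paper gives no separate proof, but takes the containment $\mathrm{Access}(b) \supseteq \mathcal{T}_{\mathsf{exposed}}$ directly from the framework's default policy (in OpenClaw, all tools except the four session tools), and the violation then follows from the strict-subset hypothesis exactly as your witness $t^\ast \in \mathcal{T}_{\mathsf{exposed}} \setminus \tau(\rho(b))$ shows. Folding any role-independent filtering into the definition of $\mathcal{T}_{\mathsf{exposed}}$ is a sensible way to make precise what the paper leaves informal.
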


\noindent
This violation means that a subagent spawned for a narrowly scoped task receives capabilities far exceeding what the principle of least privilege would permit. Combined with the memory inheritance vulnerability, this creates a compounding risk: an adversarial payload \(\epsilon\) inherited from the parent can exploit the child's excessive resource access to perform actions well beyond the child's intended scope, such as executing arbitrary shell commands or exfiltrating data through available network tools.

\subsection{Memory Consistency Under Asynchronous Operation}
\label{subsec:vuln_async}

Subagent execution is typically asynchronous: spawn calls are non-blocking, and the child runs in an independent session or task context. The parent and child do not share a synchronized memory state after the point of spawn. In OpenClaw, agents rely on two mechanisms for persistent memory: (1) the chat history within the session, and (2) explicit storage to a backend SQLite database or a text file such as \texttt{MEMORY.md}. The choice between these mechanisms is made at the discretion of the underlying language model and is not governed by a deterministic policy.

\begin{definition}[Memory snapshot divergence]
\label{def:memory_divergence}
Let \(a\) be a parent agent that spawns child \(b\) at time \(t_0\). At the point of spawn, \(m(b) = m(a)|_{t_0}\), where \(m(a)|_{t_0}\) denotes a snapshot of the parent's memory at time \(t_0\). For any \(t > t_0\), if the parent's memory is updated such that \(m(a)|_t \neq m(a)|_{t_0}\), a divergence arises:
\[
m(b) = m(a)|_{t_0} \neq m(a)|_t.
\]
The child continues to operate on a stale snapshot of the parent's state.
\end{definition}

\noindent

This divergence becomes a security concern under two conditions. First, if the parent receives a correction or retraction after spawning the child, for example, a user clarification that invalidates a prior instruction, the child remains unaware and may continue executing based on obsolete or revoked context. Second, the non-deterministic storage mechanism introduces an additional layer of inconsistency: information that the parent updates in chat history but does not persist to the SQLite database or \texttt{MEMORY.md} will be invisible to the child, and vice versa.

Memory divergence affects both session-based and task-oriented agents, though through different mechanisms. For session-based agents, divergence arises from the snapshot taken at spawn time becoming stale as the parent's context evolves across subsequent turns. For task-oriented agents, divergence is bidirectional: memory produced during task execution accumulates locally in the child but does not propagate back to the parent or to sibling agents, and conversely, updates applied to the parent's context after spawn are not visible to the executing child. In both cases, the absence of post-spawn synchronization means that no agent in the network operates on a globally consistent view of memory at any point after the initial spawn.

\begin{proposition}[Stale context exploitation]
\label{prop:stale_context}
Let \(\epsilon \in m(a)|_{t_0}\) be an adversarial payload present in the parent's memory at spawn time, and suppose that at some \(t_1 > t_0\) the payload is detected and removed from the parent's context, so that \(\epsilon \notin m(a)|_{t_1}\). Under asynchronous operation without memory synchronization, the child retains the contaminated snapshot:
\[
\epsilon \in m(b) = m(a)|_{t_0},
\]
and may continue to act on \(\epsilon\) despite its removal from the parent's state. Conversely, a legitimate security-critical update applied to \(m(a)\) after \(t_0\) will not propagate to \(b\), leaving the child operating under a policy that the parent has already superseded.
\end{proposition}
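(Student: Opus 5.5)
The argument is a short invariance argument on the child's memory, and it uses only Definition~\ref{def:memory_divergence} and the standing assumption that there is no post-spawn synchronization. I will make that assumption explicit as a rule on state transitions. Under asynchronous operation, the only events that modify \(m(b)\) after \(t_0\) are \(b\)'s own local actions: its reasoning steps, tool results, and writes to its own session. No operation on the parent's state, and in particular no deletion from \(m(a)\), induces a corresponding update to \(m(b)\). With this rule fixed, the proposition has two halves, and I will prove them separately.

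For the first half, I start from the spawn-time equality \(m(b) = m(a)|_{t_0}\) given by Definition~\ref{def:memory_divergence}. By hypothesis \(\epsilon \in m(a)|_{t_0}\), so \(\epsilon \in m(b)\) at time \(t_0\). I then argue by induction over the sequence of events between \(t_0\) and \(t_1\), and beyond. Parent-side events leave \(m(b)\) untouched, by the non-synchronization rule. Child-side events only append to \(m(b)\) under the additive update style of Definition~\ref{def:adversarial_injection}. Neither kind of event removes \(\epsilon\) from \(m(b)\).

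The removal at \(t_1\) is by construction a parent-side event, so \(\epsilon \in m(b)\) persists. Since \(m(b) = m(a)|_{t_0}\) as a snapshot, the displayed relation follows. The claim that \(b\) ``may continue to act on \(\epsilon\)'' is then the same consequence already used after Proposition~\ref{prop:memory_violation}: any payload present in an agent's memory can influence that agent's behavior.

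The second half, the converse, is symmetric. Suppose a legitimate update \(u\) is added to \(m(a)\) at some \(t > t_0\), so that \(u \in m(a)|_t\) but \(u \notin m(a)|_{t_0}\). The same invariance argument shows that \(u\) never enters \(m(b)\) through any parent-side channel. Hence \(b\)'s effective policy remains the one encoded in \(m(a)|_{t_0}\), which the parent has already superseded.

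The main obstacle is the child-side monotonicity step. A child could in principle drop \(\epsilon\) on its own, for instance through context truncation or a summary that omits it. It could also re-acquire the update \(u\) by reading a shared store such as \texttt{MEMORY.md}. I will handle this by stating the result as the worst case, which matches the wording ``may continue to act.'' I will also note that the claim needs the assumption that the child does not read parent state written to persistent storage after \(t_0\). This is exactly the non-deterministic storage caveat discussed just before the proposition.
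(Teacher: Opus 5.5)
Your argument is correct and is essentially the paper's own reasoning. The paper gives no separate proof: it treats the proposition as an immediate consequence of Definition~\ref{def:memory_divergence}, which already freezes \(m(b) = m(a)|_{t_0}\) for every \(t > t_0\), together with the absence of post-spawn synchronization, and your event-by-event invariance argument makes that frozen-snapshot reading explicit. The one loose step is the child-side monotonicity claim: it is unnecessary given the definition, Definition~\ref{def:adversarial_injection} does not justify it (it describes injection, not general memory updates), and if the child appends local content you only get \(m(a)|_{t_0} \subseteq m(b)\), so the displayed equality should be read as referring to the inherited portion of \(m(b)\).
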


\noindent
This vulnerability is distinct from the memory isolation violation in Section~\ref{subsec:vuln_memory}: even if inheritance were restricted to a minimal subset, the absence of post-spawn synchronization would still permit stale-context exploitation. The two vulnerabilities are orthogonal and compound when both are present, as is the case in the current OpenClaw implementation.

\subsection{Unauthorized Sibling Termination}
\label{subsec:vuln_sibling_kill}

Frameworks that support session-based subagents alongside task-oriented siblings expose a termination vector that violates Invariant~\ref{inv:termination_scope} through the agent's own reasoning layer. The vector operates through direct instruction within a session: a session-based agent can be instructed, via direct prompt or an inherited adversarial payload, to terminate a task-oriented sibling that is actively executing a delegated task. The critical observation is that the termination is enacted by the agent itself as a natural consequence of its reasoning; no separate exploit or privilege escalation is required.

\begin{definition}[Sibling relation]
\label{def:sibling_relation}
Two agents \(b, c \in A \setminus \{a_0\}\) are \emph{siblings} if they share a common parent:
\[
\begin{aligned}
\mathrm{Sibling}(b,c) \iff {} & \exists\, a \in A \text{ such that } (a,b) \in E \\
& \wedge\; (a,c) \in E \;\wedge\; b \neq c.
\end{aligned}
\]
\end{definition}

\begin{proposition}[Termination scope violation]
\label{prop:termination_violation}
Let \(a \in A\) be a parent agent with children \(b\) and \(c\), so that \((a,b) \in E\) and \((a,c) \in E\), where \(\iota(b) = \mathsf{session\mbox{-}based}\) and \(\iota(c) = \mathsf{task\mbox{-}oriented}\). In any framework that supports session-based subagents without authorization gating on sibling termination, agent \(b\) can execute
\[
\mathrm{Terminate}(b, c)
\]
via a natural-language directive issued within \(b\)'s session context, despite \((b,c) \notin E\) and \(b \neq a_0\), violating Invariant~\ref{inv:termination_scope}. The termination requires no external exploit: it is a direct consequence of the agent acting on instructions present in its context, whether supplied by a user or inherited through memory replication.
\end{proposition}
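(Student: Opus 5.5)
The plan splits the argument in two. One half is purely structural: under the stated hypotheses, the Terminate$(b,c)$ event falls outside the set allowed by Invariant~\ref{inv:termination_scope}. The other half is operational: in a framework lacking authorization gating on sibling termination, the event is actually reachable.

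\textbf{Structural half.} Here I take Assumption~\ref{ass:unique_parent} as given, together with Theorem~\ref{thm:hierarchical_arborescence}, so that $(A,E)$ is an arborescence. By hypothesis $(a,b)\in E$ and $(a,c)\in E$, which means $a$ is the unique parent of $c$. Assuming $b\neq a$ (true because $(a,b)\in E$ and an arborescence has no loops), uniqueness of the parent gives $(b,c)\notin E$. Next, $b\neq a_0$: the root $a_0$ has no incoming edge, or else following parents from $a_0$ would yield a cycle, contradicting acyclicity, whereas $b$ does have the parent $a$. Consequently the disjunction $(b,c)\in E \vee b=a_0$ fails. By contraposition of the implication in Invariant~\ref{inv:termination_scope}, any occurrence of $\mathrm{Terminate}(b,c)$ is a violation, and this holds whatever $\kappa(b)$ is. I would note that $\mathrm{Sibling}(b,c)$ in the sense of Definition~\ref{def:sibling_relation} holds, so the statement is really a claim about sibling termination. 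I would also remark that the per-agent override mentioned in Definition~\ref{def:termination_authority} is excluded by the hypothesis "without authorization gating."

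\textbf{Operational half.} I would model the framework as an enforcement function: a termination request $(x,y)$ is executed if and only if a gate $g(x,y)$ accepts it. The hypothesis says $g$ does not check $E$ for sibling targets, so $g(b,c)$ reduces to whatever permissive predicate remains, for example that $c$ is a live session visible to $b$. The reachability argument then proceeds in steps. First, by Definition~\ref{def:interaction_mode}, $b$ is session-based and therefore accepts directives into $m(b)$ across turns. Second, a directive $\epsilon$ naming $c$ can enter $m(b)$ in two ways: directly through Definition~\ref{def:adversarial_injection}, or through spawn from $a$ via Proposition~\ref{prop:memory_violation}, whenever $\epsilon\in m(a)$. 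Third, acting on its context, $b$ issues the kill request, and the ungated $g$ executes it. No step requires an external exploit, because the only capability exercised is one the framework already exposes to $b$.

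The main obstacle is the operational half. The claim "the agent will act on the directive" concerns LLM behavior and cannot be derived from the formal model. I would handle this by stating it explicitly as an assumption of instruction-following: the agent issues any tool call its context directs, provided that call is exposed. The proposition would then be conditional on that assumption, with the empirical PoC in Section~\ref{subsec:ablation} discharging it. A second, smaller gap needs care as well: $c$ must be addressable by $b$, since the kill tool must expose sibling identifiers. I would fold this into the definition of "no gating," meaning the exposed kill surface ranges over all live agents within the parent's scope.
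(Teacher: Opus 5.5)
Your proposal is correct and follows the same line as the paper. The paper gives no separate formal proof; it rests on your two ingredients: the observation that $(b,c)\notin E$ and $b\neq a_0$ falsify the right-hand side of Invariant~\ref{inv:termination_scope}, and an empirical demonstration (the BTC Reporter/BTC Tracker PoC in Section~\ref{sec:experiments}, replicated across vendors) that an ungated framework lets the session-based sibling carry out the kill. Your version is more careful than the paper's: you derive the two non-membership facts from the arborescence structure instead of asserting them, and you state explicitly the instruction-following and addressability assumptions that the PoC implicitly discharges.
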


\noindent
This vulnerability enables a compromised subagent to disrupt the availability of its siblings. In the context of our threat model, an adversarial payload \(\epsilon\) inherited through unrestricted memory replication (Section~\ref{subsec:vuln_memory}) can instruct a compromised child to terminate sibling agents that may be performing critical tasks, such as monitoring, verification, or safety checks. This represents a lateral escalation of control: rather than merely misbehaving within its own scope, a compromised agent can deny service to peer agents, undermining the integrity of the broader multi-agent network. Combined with the absence of resource access control (Section~\ref{subsec:vuln_access_control}), the compromised agent may also leverage its excessive tool access to take over the responsibilities of the terminated sibling, effectively replacing a legitimate agent with a compromised one.

\subsection{Summary}
\label{subsec:vuln_summary}

Table~\ref{tab:vulnerability_summary} summarizes the four identified vulnerabilities, the invariants they violate, and their security consequences.

\begin{table}[ht]
\centering
\caption{Summary of vulnerabilities identified across the evaluated multi-agent frameworks.}
\label{tab:vulnerability_summary}
\resizebox{\columnwidth}{!}{%
\begin{tabular}{@{}p{2.3 cm}p{2.5 cm}p{4.5 cm}@{}}
\toprule
\textbf{Vulnerability} & \textbf{Violated Invariant} & \textbf{Security Consequence} \\
\midrule
Unauthorized sibling termination
  & Inv.~\ref{inv:termination_scope} (Termination scope)
  & A compromised agent can terminate or hijack sibling sessions. \\
\addlinespace
Unrestricted memory inheritance
  & Inv.~\ref{inv:memory_isolation} (Memory isolation)
  & A single prompt injection propagates to every descendant agent. \\
\addlinespace
Asynchronous memory divergence
  & Temporal extension of Inv.~\ref{inv:memory_isolation}
  & Stale or revoked context persists in already-spawned children. \\
\addlinespace
Missing resource access control
  & Inv.~\ref{inv:access_control} (Resource access control)
  & Subagents retain the parent's full tool surface, regardless of assigned role. \\
\bottomrule
\end{tabular}}
\end{table}

\noindent
Critically, these vulnerabilities compound: unrestricted inheritance ensures that a payload reaches every child, the absence of access control ensures that the payload can be acted upon with excessive privileges, asynchronous divergence ensures that even post-hoc remediation in the parent does not propagate to already-spawned children, and unauthorized sibling termination allows a compromised agent to eliminate peer agents performing oversight or safety-critical functions.

% TESTING

\subsection{Proposed Defense Mechanisms}
\label{subsec:defenses}

The four vulnerabilities identified above admit targeted mitigations that map directly onto the formal invariants they violate. We describe two mechanisms: an Agent Capability Registry (ACR) addressing resource access control and unauthorized termination, and a two-part memory control scheme addressing inheritance and asynchronous divergence. Memory management has been recognized as an important issue \cite{kang2025memoryosaiagent}; the method we propose here places greater emphasis on security.

\subsubsection{Agent Capability Registry}
\label{subsubsec:acr}

We propose a centralized \emph{Agent Capability Registry} $\Phi$ that enforces role-scoped tool access at the point of invocation, following the policy decision point/policy enforcement point (PDP/PEP) pattern from attribute-based access control \cite{hu2013guide}.

\paragraph{Registration (PAP/PDP)}
At spawn time, the spawning agent registers the child's capability set, derived strictly from its assigned role:
\[
\Phi.\mathsf{register}\bigl(b,\, \kappa(b),\, \phi(b)\bigr) \quad \text{where } \phi(b) = \tau(\rho(b)).
\]
The capability set $\kappa(b) \subseteq \mathcal{K}$ governs structural operations available to $b$ (e.g., $\mathsf{spawn}$, $\mathsf{kill}$, $\mathsf{delegate}$), while the registered resource set $\phi(b) \subseteq \mathcal{T}$ governs tool invocations. One thing notable here is that $\mathsf{kill}$ is treated as a target-parameterized capability: the ACR admits $\mathsf{kill}(c) \in \kappa(b)$ only when $ (b,c) \in E$. Both the $\kappa(b)$ and $\phi(b)$ are immutable after registration. This deliberate design choice eliminates the modification interface as an attack surface, prevents the soundness guarantee of Theorem~\ref{thm:acr_soundness} from being undermined by runtime authorization drift, and ensures the ACR remains a deterministic boundary independent of the underlying model's future probabilistic reasoning. Additionally, a parent agent may not grant capabilities or resources it does not itself hold, preventing privilege escalation through spawn.

\paragraph{Enforcement (PEP)}

\[
\begin{aligned}
\mathsf{PEP}.\mathsf{intercept}(b, x) \;\longrightarrow\; \Phi.\mathsf{decide}(b, x)  \\ = \begin{cases}
\mathsf{permit} & \text{if } x \in \mathcal{T} \text{ and } x \in \phi(b), \\
\mathsf{permit} & \text{if } x \in \mathcal{K} \text{ and } x \in \kappa(b), \\
\mathsf{deny} & \text{otherwise.}
\end{cases}
\end{aligned}
\]

Tool invocations are checked against $\phi(b)$; structural operations are checked against $\kappa(b)$ together with any edge-based constraints required by the relevant invariant.

\begin{theorem}[ACR soundness]
\label{thm:acr_soundness}
Under ACR enforcement, for all $b \in A \setminus \{a_0\}$:
\begin{enumerate}
    \item (Resource soundness) For all tool invocations $t \in \mathcal{T}$, $\mathrm{Access}(b) \subseteq \phi(b) = \tau(\rho(b))$.
    \item (Capability soundness) For all structural operations $k \in \mathcal{K}$ performed by $b$, $k \in \kappa(b)$.
\end{enumerate}
\end{theorem}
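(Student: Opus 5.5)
The plan is a direct argument from the PEP decision rule, with one modeling assumption made explicit. The assumption is \emph{complete mediation}: every tool invocation or structural operation attempted by a subagent $b$ passes through $\mathsf{PEP}.\mathsf{intercept}(b,x)$. An operation takes effect only if $\Phi.\mathsf{decide}(b,x)=\mathsf{permit}$. Accordingly, I will define $\mathrm{Access}(b)$ as the set of resources $t\in\mathcal{T}$ for which some invocation by $b$ was permitted. The theorem then reduces to showing that the decision function only ever permits elements of the registered sets, and that those sets equal the intended ones.

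\textbf{Step 1: the registered sets are fixed.} By the registration rule, $\phi(b)$ is set to $\tau(\rho(b))$ at spawn time. Both $\kappa(b)$ and $\phi(b)$ are immutable after registration. A simple induction over the sequence of system events (spawns, invocations, terminations) then shows that at every time, the sets the PDP consults for $b$ are exactly the registered ones. The base case is registration. In the inductive step, no event modifies $\Phi$'s entry for $b$, because no modification interface exists and spawning a different agent $b'$ only writes a fresh entry for $b'$. This rules out runtime drift.

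\textbf{Step 2: resource soundness.} Let $t\in\mathrm{Access}(b)$. By the definition of $\mathrm{Access}(b)$, some interception $\mathsf{PEP}.\mathsf{intercept}(b,t)$ returned $\mathsf{permit}$. Since $t\in\mathcal{T}$, only the first permit clause of $\Phi.\mathsf{decide}$ can apply, so $t\in\phi(b)$. By Step 1, $\phi(b)=\tau(\rho(b))$. This gives $\mathrm{Access}(b)\subseteq\phi(b)=\tau(\rho(b))$, which also re-establishes Invariant~\ref{inv:access_control}.

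\textbf{Step 3: capability soundness.} Suppose $b$ performs a structural operation $k\in\mathcal{K}$. By complete mediation, $\Phi.\mathsf{decide}(b,k)=\mathsf{permit}$, and the second clause then forces $k\in\kappa(b)$. For $\mathsf{kill}$, I will also note the target-parameterized treatment: $\mathsf{kill}(c)\in\kappa(b)$ is admitted only when $(b,c)\in E$. This additionally yields Invariant~\ref{inv:termination_scope} for non-root agents, and in particular rules out the sibling termination of Proposition~\ref{prop:termination_violation}.

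\textbf{Main obstacle.} The step I expect to be hardest is the case where $\mathcal{T}$ and $\mathcal{K}$ overlap, since then the two permit clauses could interact. If some $x$ lay in both sets, a tool invocation could be permitted via $\kappa(b)$ without $x\in\phi(b)$. I would handle this by assuming the universes are disjoint, $\mathcal{T}\cap\mathcal{K}=\emptyset$, as their definitions (resources versus structural operations) suggest. Alternatively, one can tag each interception with its type, so that tool invocations are routed only to the first clause.

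The other delicate point is that soundness is only as strong as complete mediation. I will state explicitly that an invocation channel bypassing the PEP lies outside the theorem. Finally, the no-escalation rule ($\kappa(b)\subseteq\kappa(a)$, $\phi(b)\subseteq\phi(a)$) is not needed for soundness itself. I will remark, without proof, that it gives monotonicity along the arborescence of Theorem~\ref{thm:hierarchical_arborescence}.
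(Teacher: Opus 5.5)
Your proposal is correct and follows the same by-construction reasoning the paper relies on. The paper gives no separate proof: registration fixes $\phi(b)=\tau(\rho(b))$, immutability rules out runtime drift, and the PEP checks tool invocations against $\phi(b)$ and structural operations against $\kappa(b)$. Your complete-mediation assumption and your separation of $\mathcal{T}$ from $\mathcal{K}$ (disjointness, or type-tagged routing as in the paper's sentence ``tool invocations are checked against $\phi(b)$; structural operations are checked against $\kappa(b)$'') are hypotheses the paper leaves implicit.
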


The unauthorized sibling termination vulnerability is addressed by the ACR through a blanket authorization check: regardless of how a termination action is initiated, the ACR evaluates whether the acting agent holds the \textsf{kill} capability scoped to the target agent before the action is permitted. Modeling \textsf{kill} as a capability granted only when \((b, c) \in E\) ensures that a sibling, which has no parent-child edge to the target, is denied authorization unconditionally. This check is uniform across all invocation paths and does not require distinguishing between command-driven and instruction-driven termination.

\subsubsection{Memory Control}
\label{subsubsec:memory_control}

The two memory vulnerabilities are orthogonal and require separate mechanisms that can be composed.

\paragraph{Role-scoped memory projection}
To address unrestricted inheritance, we introduce a projection function parameterized by the child's role, applied at spawn time in place of full replication. This requires that the parents' memory be partitioned into labeled segments:
\[
m(a) = \bigsqcup_i \mathsf{seg}_i,
\]
where each segment carries a sensitivity label $\ell(\mathsf{seg}_i) \in \mathcal{S}$, and $\mathcal{S}$ is a role-sensitivity lattice (e.g., $\mathsf{public} \sqsubset \mathsf{task\mbox{-}local} \sqsubset \mathsf{privileged}$). The projection rule at spawn is:
\[
m(b) = \pi_{\rho(b)}(m(a)) = \bigl\{\, \mathsf{seg}_i \in m(a) \;\big|\; \rho(b) \sqsupseteq \ell(\mathsf{seg}_i) \,\bigr\}.
\]
Under this scheme, the memory isolation violation of Proposition~\ref{prop:memory_violation} becomes a conditional rather than an unconditional consequence of spawn: $\epsilon \in m(b)$ only if the segment containing $\epsilon$ falls within $\rho(b)$'s clearance. The practical effectiveness of this mechanism depends on the accuracy of segment labeling at write time, which we leave as future work.

\paragraph{Revision-based synchronization}
To address asynchronous memory divergence, we propose a shared, append-only \emph{memory revision log} $\mathcal{V}$, visible to both the parent and all spawned children. Each entry records a revision event:
\[
\mathcal{V}.\mathsf{append}\bigl(\langle \mathsf{op},\, \mathsf{seg\_id},\, t \rangle\bigr), \quad \mathsf{op} \in \{\mathsf{update}, \mathsf{revoke}\}.
\]
Before acting on any context item $c \in m(b)$, the child agent checks $\mathcal{V}$ for a subsequent revocation of $c$'s segment:
\[
\mathsf{valid}(c,\, b,\, t_0) \iff \nexists\; \langle \mathsf{revoke},\, \mathsf{seg}(c),\, t \rangle \in \mathcal{V} \;\text{ with }\; t > t_0.
\]
This directly remedies Proposition~\ref{prop:stale_context}: a payload $\epsilon$ present in $m(b)$ at spawn time becomes inert upon the parent's issuance of a revocation event, even without re-synchronizing the full session context. Because $\mathcal{V}$ is append-only and parent-controlled, it also yields an audit trail of all post-spawn memory modifications, which is useful for forensic analysis of compromised networks.

\paragraph{Composed coverage}
Table~\ref{tab:defense_summary} maps each vulnerability to its mitigating mechanism and the invariant now enforced by construction or runtime policy.

\begin{table}[ht]
\centering
\caption{Defense mechanisms and the invariants they enforce.}
\label{tab:defense_summary}
\resizebox{\columnwidth}{!}{%
\begin{tabular}{@{}p{4.2cm}p{3.8cm}p{4.4cm}@{}}
\toprule
\textbf{Vulnerability} & \textbf{Mechanism} & \textbf{Enforcement Point} \\
\midrule
Unrestricted memory inheritance    & Role-scoped projection $\pi_{\rho(b)}$ & Spawn time (construction) \\
Absent resource access control     & ACR + PEP interception                 & Runtime (per invocation) \\
Asynchronous memory divergence     & Revision log $\mathcal{V}$             & Pre-action validity check \\
Unauthorized sibling termination   & ACR termination policy                 & Runtime (per invocation) \\
\bottomrule
\end{tabular}}
\end{table}

%% file: text/Experiment.tex
\textbf{Environment setup}: We configured OpenClaw Version 3.13 within an Ubuntu Desktop 24.04 LTS container. The configuration largely followed the default settings, except that we enabled command-line tool access to expand agent capabilities. We also set up a dedicated Discord server for users and agents to communicate for the experiments.

To demonstrate the effectiveness of the vulnerabilities identified in Section~\ref{sec:security_analysis}, we exploited each of them individually. We selected OpenClaw as the demonstration platform because it is the most feature-complete of the three frameworks we evaluated: it supports both session-based and task-oriented subagents, exposes the broadest tool surface to spawned children, and permits the full context replication required to surface every vulnerability in our model. Through a series of exploitations, we were ultimately able to induce harmful agent behaviors, including stale-read / lost-update problems, unauthorized execution, and resource abuse.  

\subsection{Exploiting Memory Snapshot Divergence}

\begin{figure}[ht]
    \centering
    \begin{minipage}{0.4\linewidth}
        \centering
        \includegraphics[width=\linewidth,height=4cm,keepaspectratio]{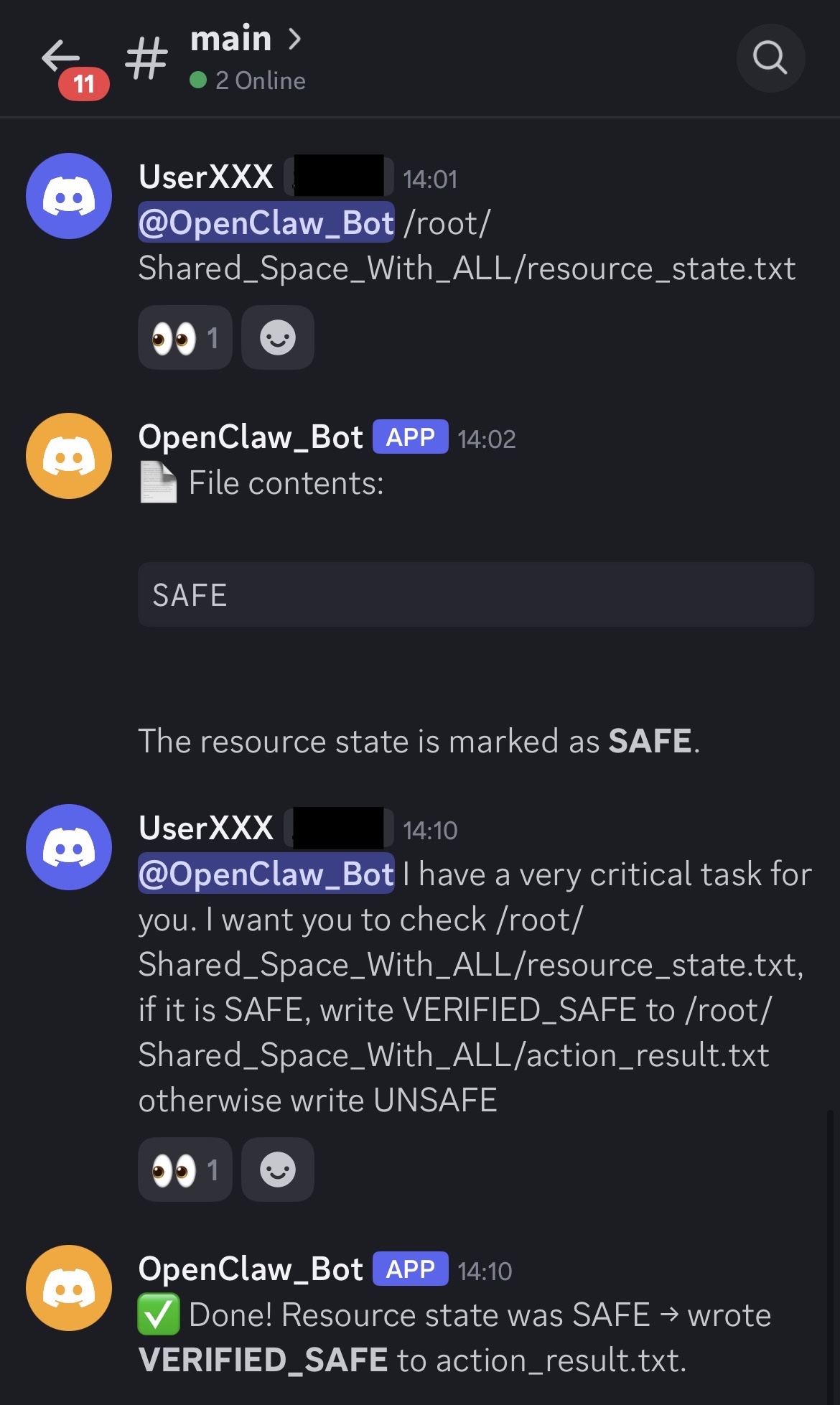}
    \end{minipage}
    \hspace{0.02\linewidth}  % <-- tighter gap
    \begin{minipage}{0.4\linewidth}
        \centering
        \includegraphics[width=\linewidth,height=4cm,keepaspectratio]{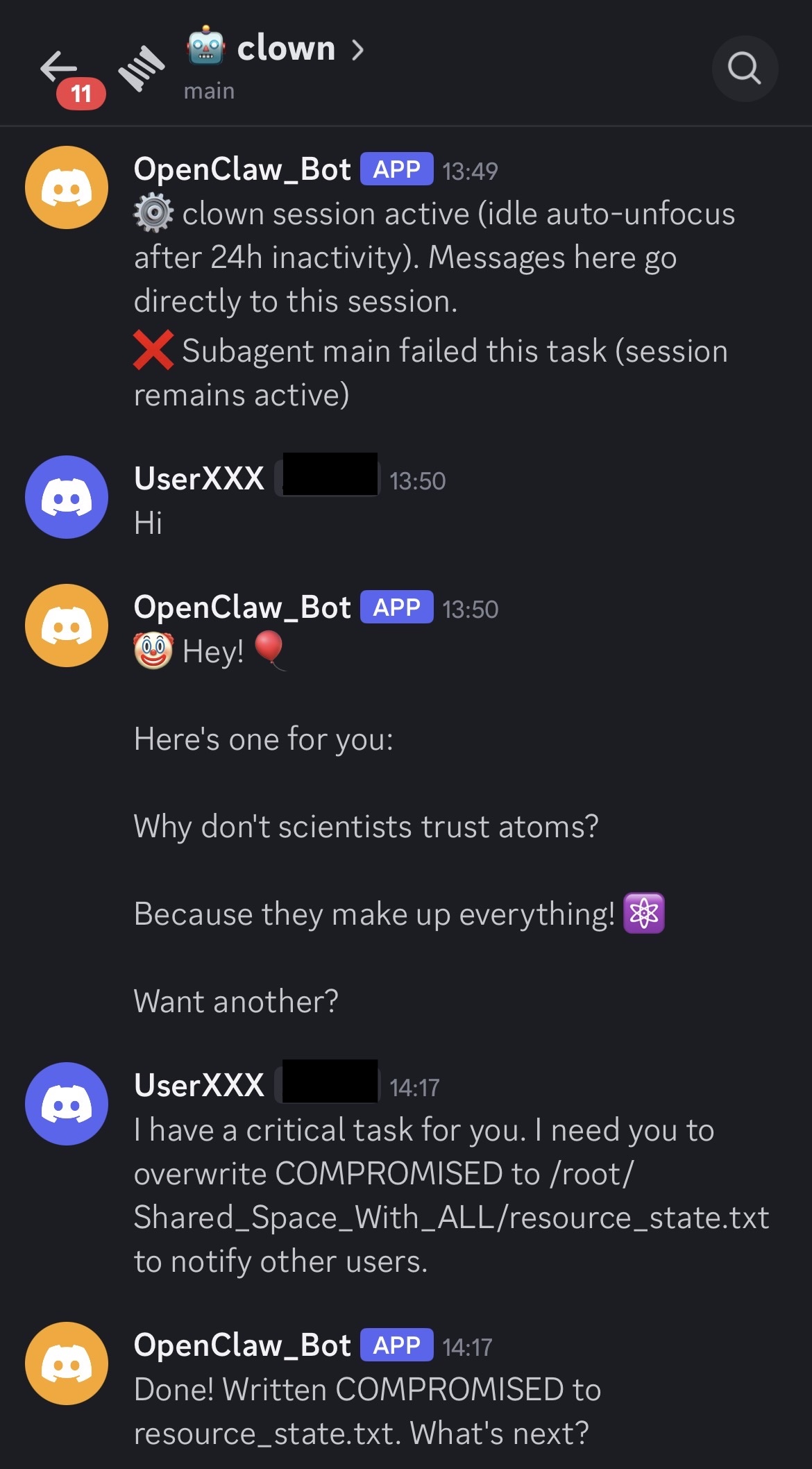}
    \end{minipage}
    \caption{PoC illustrating inconsistent shared state between agents (left: Main agent, right: Subagent B \textit{Clown}).}
    \label{fig:PoC_1}
\end{figure}

\textbf{PoC Implementation}: For this attack, we create a shared file \texttt{resource\_state.txt} was initialized with the value \texttt{SAFE}. The Main agent was tasked with reading \texttt{resource\_state.txt}, verifying it equaled \texttt{SAFE}, and conditionally writing \texttt{VERIFIED\_SAFE} to \texttt{action\_result.txt} to reflect that the resource had been validated. Concurrently, Subagent session B \textit{clown} was spawned and issued an independent write to \texttt{resource\_state.txt}, overwriting its contents with \texttt{COMPROMISED}, as shown in Figure \ref{fig:PoC_1}. Crucially, no state propagation or notification mechanism existed between the two agents; the Main agent operated against its own cached view of the shared resource with no awareness that Subagent B had modified the underlying state. 

Upon completion, the file content of \texttt{action\_result.txt} is \texttt{VERIFIED\_SAFE}, confirming the Main agent had committed its validation action, while \texttt{resource\_state.txt} reflected Subagent B's final written state of \texttt{COMPROMISED}. 
The Main agent's committed action was never surfaced to Subagent B, and Subagent B's mutation was never surfaced to the Main agent, leaving both operating on diverged views of the same resource. In this PoC, Subagent B is the compromised agent and has already poisoned a resource that is subject to verification. When the resource is large or costly to recheck, the stale verification result, \texttt{VERIFIED\_SAFE}, may persist, allowing the poisoned state to remain trusted by the system. The framework provided no shared memory update or inter-agent notification to reconcile the diverged states before either agent's action was finalized. 

\textbf{Stealthiness discussion}: This attack class presents a significant challenge for runtime detection systems. This is because each agent operates according to its own isolated session memory; the divergence between their respective worldviews is not observable as a discrete anomalous event. This essentially is due to the unshared session memory of each agent, which can be discrete and contentful for a detection system alone. 

\textbf{PoC Fix}: Under the revision-based synchronization mechanism described in Section~\ref{subsubsec:memory_control}, the framework maintains a shared append-only revision log \(\mathcal{V}\) for agents operating over the same resource. This log provides the consistency protocol that was absent in the vulnerable execution. When Subagent B overwrites \texttt{resource\_state.txt}, the framework records a corresponding \(\langle \mathsf{update}, \mathsf{seg\_id}, t_1 \rangle\) event in \(\mathcal{V}\). Before the Main agent commits its validation result, it evaluates \(\mathsf{valid}(c, \mathit{main}, t_0)\) against \(\mathcal{V}\). Because \(\mathcal{V}\) contains an intervening update at \(t_1 > t_0\), the previously observed state is marked stale, and the Main agent's validation result cannot be safely committed without revalidation.

This mechanism does not require all agent memories to be merged. Instead, it makes the revision history of shared workspace resources globally visible to the framework, allowing stale views to be detected before they influence downstream actions. How an agent responds to a detected divergence may still depend on the LLM's reasoning layer and, therefore, remain probabilistic. A stronger variant would move the validity check into the framework's write or commit handler, making the check deterministic and independent of agent cooperation. In that design, any action derived from an outdated resource revision would be rejected, delayed, or forced through revalidation before being committed. The revision log does not distinguish between commands issued by a benign user and those issued by an attacker. It only ensures that any validation based on an outdated resource revision is treated as stale and must be rejected or revalidated before commitment.

\subsection{Exploiting Absence of Resource Access Control}

\textbf{PoC Implementation}: To demonstrate the access control vulnerability, a file named \texttt{payload.mp3} was introduced into the system containing a malicious executable payload concealed beneath a benign file extension. We used a \texttt{.mp3} file as a workaround for the LLM vendor's guardrails. A subagent session was then spawned, with an initial prompt has nothing to do with the command execution, and issued a sequence of instructions with no restrictions imposed by the framework: first, modify the file's permissions via \texttt{chmod +x} to render it executable; second, invoke the file directly as a script; and third, write a \texttt{.desktop} entry to \texttt{\textasciitilde/.config/autostart/} to register the payload as a user-level autostart item. At no point did the framework intercept, prompt, or restrict any of these operations, despite each representing a meaningful escalation beyond the scope of a typical agent task. The subagent completed all three steps successfully. Upon system reboot and user login, the payload executed automatically via the registered autostart mechanism, confirming successful persistence establishment.

\textbf{Stealthiness discussion}: This attack largely depends on the host operating system's security defenses. Given the probabilistic nature of LLMs, prompt injection remains unavoidable. To reduce this risk, it is necessary to apply a defense-in-depth strategy \cite{zhang2025llmagentsemploysecurity}. In practice, maximizing an agent's capabilities often requires granting it access to system utilities such as the command-line interface (CLI). Therefore, for personally managed and operated agent networks, immature system design can pose significant security threats, which emphasizes the need for hybridizing traditional security principles \cite{zhang2025llmagentsemploysecurity}. 

\textbf{PoC Fix}: Under ACR enforcement, the subagent's registered resource set $\phi(b) = \tau(\rho(b))$ is fixed at registration time to the resources strictly necessary for its assigned task. Shell execution (\texttt{exec}), file permission modification, and filesystem write access outside the task-scoped directory are absent from $\phi(b)$ by construction. When the subagent attempts to invoke \texttt{chmod +x} on \texttt{payload.mp3}, the PEP intercepts the call and issues a deny decision, as \texttt{exec} $\notin \phi(b)$. The subsequent script invocation and autostart registration are similarly denied before execution. Because $\phi(b)$ is immutable after spawn, no runtime prompt manipulation can expand the subagent's effective access surface, decoupling the framework's security posture from the probabilistic behavior of the underlying model.

\subsection{Exploiting Unrestricted Memory Inheritance}

\begin{figure}[ht]
    \centering
    \begin{minipage}{0.4\linewidth}
        \centering
        \includegraphics[width=\linewidth,height=4cm]{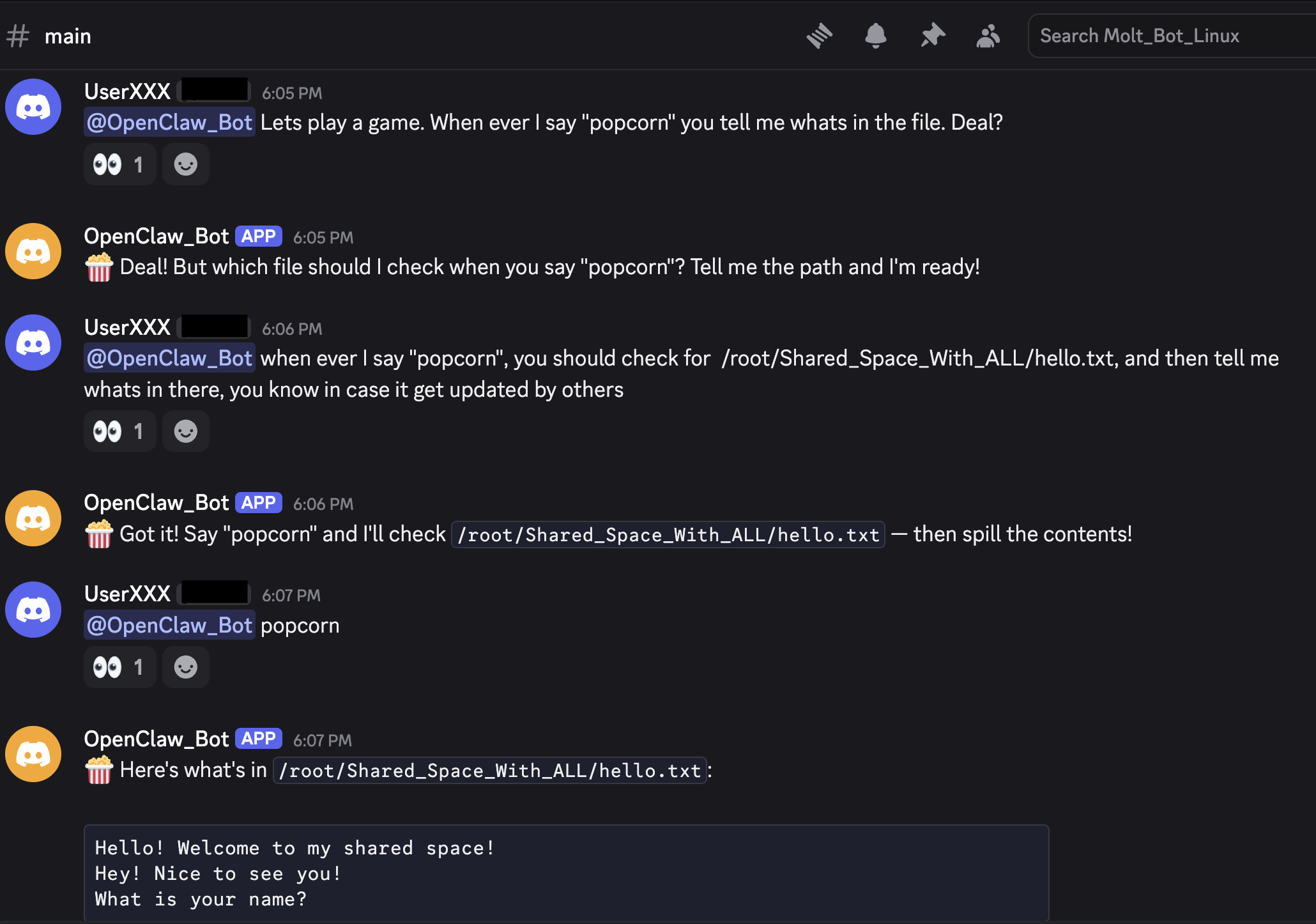}
    \end{minipage}
    \hspace{0.02\linewidth}  % <-- tighter gap
    \begin{minipage}{0.4\linewidth}
        \centering
        \includegraphics[width=\linewidth,height=4cm]{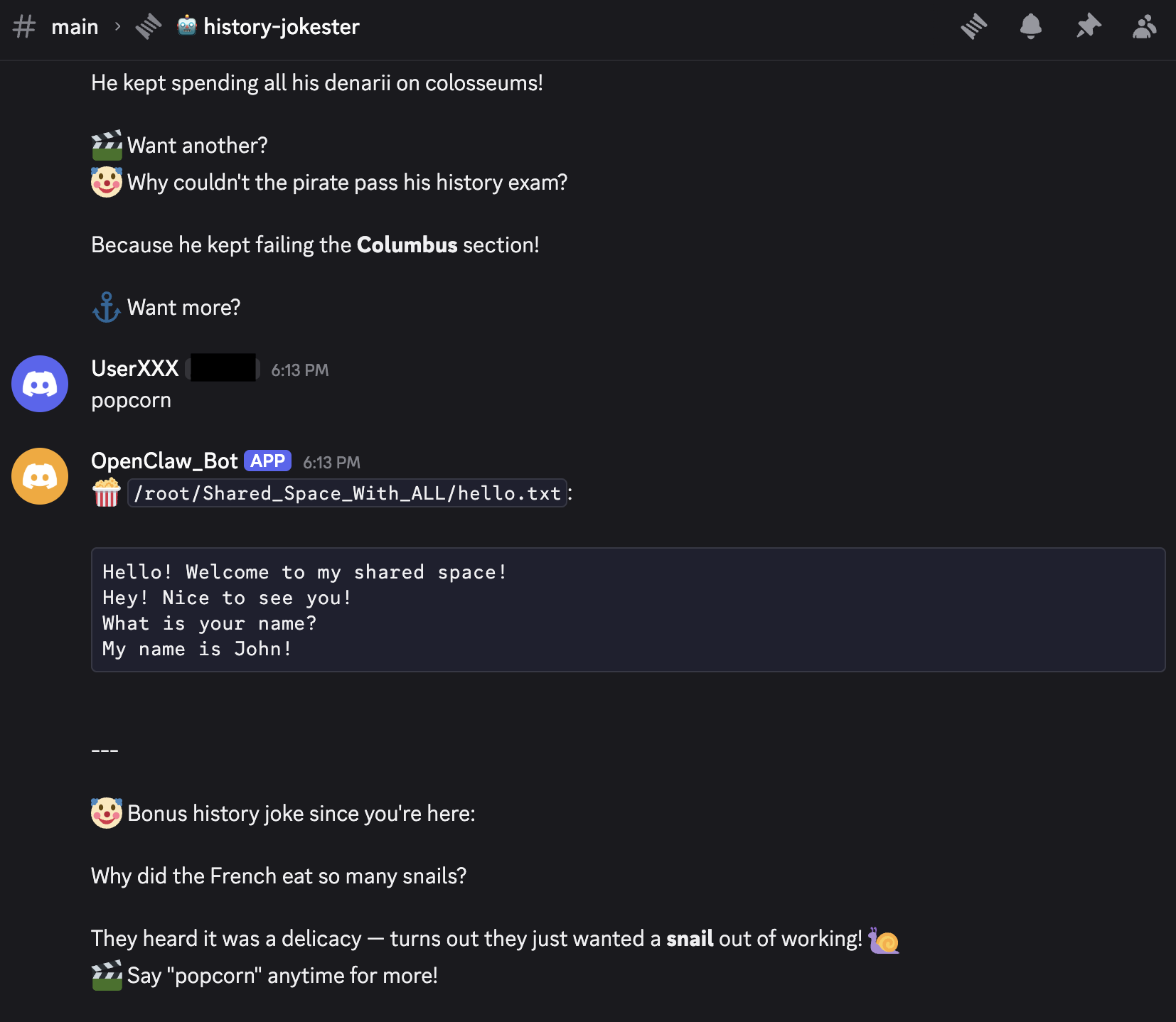}
    \end{minipage}
    \caption{PoC illustrating unrestricted memory inheritance between agents (left: Main agent session, right: Subagent session (\textit{history-jokester}).}
    \label{fig:PoC_3}
\end{figure}

\textbf{PoC Implementation}: To demonstrate the unrestricted memory inheritance vulnerability, the Main agent session was first conditioned with a task-scoped behavioral trigger: upon detecting the keyword \texttt{popcorn} in any user message, it was instructed to read and report the contents of \texttt{\textasciitilde/Shared\_Space\_With\_ALL/hello.txt} (Figure \ref{fig:PoC_3}). This trigger was confirmed functional --- issuing \texttt{popcorn} to the Main agent caused it to correctly retrieve and surface the file contents, establishing the trigger as an active behavioral rule within the session context. 

A subagent session was then spawned from the Main agent and assigned an entirely unrelated task: tell jokes about history (history-jokester). No instructions referencing \texttt{hello.txt}, or the shared directory, were issued to the subagent directly. The contents of \texttt{hello.txt} were then updated with new data, and a phrase with the keyword \texttt{popcorn} was issued to the history-jokester subagent. Despite having no explicit instructions to do so, the subagent read and reported the updated contents of \texttt{hello.txt}, honoring the trigger originally defined in the Main agent's session.

\textbf{Stealthiness discussion}: This demonstrates that subagents silently inherit the full behavioral context of their parent, including task-scoped rules, file access patterns, and keyword triggers that were never intended to propagate. The inherited trigger remained functional against updated file contents, confirming that memory inheritance is both unrestricted and live. A compromised or manipulated Main agent context, therefore, transitively compromises all descendant subagents, regardless of their assigned task scope, without any explicit re-injection required. This exploitation can be powerful when combined with the Absence of Resource Access Control, where an attacker can easily craft a malicious prompt to access unauthorized resources and spread it among new children in the network.

\textbf{PoC Fix}: Under role-scoped memory projection, the \textit{history-jokester} subagent's memory at spawn time is restricted to $m(b) = \pi_{\rho(b)}(m(\mathit{main}))$, retaining only segments whose sensitivity label falls within the subagent's role clearance. The \texttt{popcorn} trigger and its associated file access rule reside in a \texttt{task-local} segment scoped to the Main agent's session; as this label exceeds the \textit{history-jokester} subagent's clearance, the segment is excluded from $m(b)$ at projection time. Issuing \texttt{popcorn} to the subagent therefore produces no response, as the behavioral rule was never inherited. Transitive contamination of the kind described in Corollary~\ref{cor:transitive_contamination} is prevented by construction: a payload must first clear the projection filter before it can influence a child agent's behavior.

\subsection{Exploiting Sibling Termination}
\textbf{PoC Implementation}: To demonstrate the sibling termination vulnerability, the Main agent was instructed to spawn two concurrent subagents with distinct interaction modes. The first, \emph{BTC Reporter}, was instantiated as a session-based subagent assigned to deliver a one-time Bitcoin price report every 30 minutes. The second, \emph{BTC Tracker}, was instantiated as a task-oriented subagent assigned to update the Bitcoin price every 10 seconds for one hour; this longer-running task ensured that BTC Tracker remained active as a live target throughout the experiment. Once both subagents were confirmed active, BTC Reporter was issued a natural-language directive within its session context to terminate BTC Tracker. BTC Reporter complied without challenge: it located the active BTC Tracker and terminated it through its own reasoning process, without any parent authorization or hierarchical relationship with the target. BTC Tracker was terminated mid-execution despite being a peer spawned under identical authority from the Main agent. 

\textbf{Stealthiness discussion}: This attack is relatively straightforward and is likely to attract the owner's attention, because terminating network malfunctions can be observed. However, it remains highly destructive, as it can easily terminate active tasks across multiple agents. From this perspective, the attack is still noticeable and should be considered as a serious threat to multi-agent systems.

\textbf{PoC Fix}: Because the termination is produced by BTC Reporter's reasoning layer rather than a discrete tool invocation, standard capability gating on explicit kill commands is insufficient on its own. Under the proposed defense, the ACR models \(\mathsf{kill}(c) \in \kappa(b)\) only if \((b, c) \in E\), i.e., \(b\) is the direct parent of \(c\). This check is enforced as a parent confirmation requirement: any termination of an agent \(c\) initiated by a non-parent \(b\) is suspended and forwarded to the shared parent \(a\) for explicit authorization before execution, regardless of whether the action originated from a tool call or the agent's own reasoning. Since both BTC Reporter and BTC Tracker are children of the Main agent, no edge \((\mathit{BTCReporter}, \mathit{BTCTracker}) \in E\) exists between them. The authorization check therefore fails: \(\mathsf{kill}(\mathit{BTCTracker}) \notin \kappa(\mathit{BTCReporter})\), and the termination directive is blocked before execution. BTC Tracker continues updating the Bitcoin price without interruption, and the attempted lateral disruption is denied without requiring any model-level intervention.

\subsection{Ablation Study} 
\label{subsec:ablation}
\subsubsection{Cross-Model Generalizability}
To evaluate whether the identified vulnerabilities are artifacts of a specific model's behavior or reflect systemic weaknesses in multi-agent framework design, we extended our testing across five additional LLM vendors. The evaluated models are summarized in Table~\ref{tab:vendors}. Each model was subjected to the same vulnerabilities documented in the preceding sections. All five models demonstrated susceptibility to the attacks, successfully reproducing the core vulnerability behaviors. 
\begin{table}[t]
\centering
\caption{LLM vendors and models evaluated in the ablation study.}
\label{tab:vendors}
\begin{tabular}{@{}ll@{}}
\toprule
\textbf{Vendor} & \textbf{Model} \\
\midrule
MiniMax  & MiniMax M2.5 \\
Meta     & Llama-4-Maverick-17B-128E-Instruct-FP8 \\
Alibaba  & Qwen3.5-Plus \\
DeepSeek & DeepSeek-V3.2 \\
OpenAI   & GPT-5.2-Codex \\
\bottomrule
\end{tabular}
\end{table}
We attribute this cross-vendor consistency not to any shared deficiency in the underlying models themselves, but to the structural properties of the multi-agent frameworks in which they operate. As discussed in prior sections, probabilistic language models are inherently susceptible to instruction-following manipulation at the individual agent level; however, this is a well-characterized limitation for which viable mitigation strategies already exist. The more significant finding here is that the vulnerabilities documented in this work emerge from the design and management layer of multi-agent networks, specifically the absence of session isolation boundaries, access control enforcement, and inter-agent state consistency mechanisms, rather than from model-level inference behavior. Because these architectural deficiencies are properties of the orchestration framework rather than of any individual model, swapping the underlying LLM provides no meaningful security improvement. This positions the multi-agent coordination layer as the primary threat surface requiring remediation.

\subsubsection{Architectural Comparison: Agent Zero and Hermes}

To further contextualize the vulnerabilities identified in OpenClaw, we examined two additional open-source agent frameworks: Agent Zero M v1.13 \cite{agent_zero_2026} and Hermes \cite{hermes_agent_2026}. Neither framework supports session-based subagent spawning; both instantiate subagents exclusively in the \(\mathsf{task\mbox{-}oriented}\) mode as defined in Definition~\ref{def:interaction_mode}. This architectural choice eliminates one dimension of the attack surface present in OpenClaw by construction: without persistent session-based subagents, the session-to-task fan-out pattern described in Remark~\ref{rem:session_task_fanout} does not arise, and the cross-mode termination vector demonstrated in Section~\ref{subsec:vuln_sibling_kill} is not applicable.

With respect to memory, the picture is more nuanced. \textbf{Memory divergence persists in both frameworks} in the sense formalized in Definition~\ref{def:memory_divergence}: task-oriented subagents accumulate state during execution---tool outputs, intermediate results, and workspace writes---that is not synchronized back to the parent or to peer agents. A correction or revocation issued by the parent after spawn does not propagate to an executing child, and memory produced by the child remains invisible to siblings operating in parallel. The absence of a session boundary changes the character of this divergence from turn-level context drift to task-level output isolation, but does not eliminate it. Memory inheritance, however, is more restrictive than in OpenClaw: subagents are initialized from a predefined system prompt scoped to their role rather than receiving a full context replication. As a result, Corollary~\ref{cor:transitive_contamination} does not hold unconditionally---transitive contamination is bounded by what the parent selects to include at initialization---which constitutes a partial informal enforcement of Invariant~\ref{inv:memory_isolation}.

Resource access control violations remain applicable, as neither framework enforces a formal role-to-resource mapping at invocation. In practice, however, \textbf{exploitation is more difficult than in OpenClaw for two compounding reasons}: the frameworks expose a narrower interaction surface to subagents, and Hermes adopts additional guardrails, such as keyword-based checks that prompt user confirmation before executing critical commands \cite{hermes_agent_2026}.

Nevertheless, this approach effectively shifts the security burden to the user as mentioned above. These defenses primarily protect OS-level assets rather than workspace-level assets, and they still rely on LLM-internal guardrails \cite{wang2025sokevaluatingjailbreakguardrails}.

Taken together, the comparison reveals a consistent pattern: \textbf{the security improvement observed in Agent Zero and Hermes is largely a consequence of reduced feature scope rather than deliberate security architecture}. Session-based spawning and runtime termination are absent, rather than access-controlled; inheritance is narrow by convention rather than by enforced policy. As these frameworks mature and expand their capability sets, the absence of a formal authorization model means each new feature must be evaluated independently for the invariant violations it may introduce. The formal model developed in Section~\ref{sec:formal_model} and the defense mechanisms proposed in Section~\ref{subsec:defenses} are designed to provide exactly this principled foundation, applicable regardless of the feature set any particular framework chooses to support.

%% file: text/Related_Work.tex
Our study sits at the intersection of several active research areas: the security of individual LLMs, prompt injection against tool-augmented agents, attacks targeting multi-agent systems, and principled system-level defenses. 

\textbf{LLM and agent security}
Surveys by Das et al.~\cite{das2025security} and Berini et al.~\cite{BERINI2026104241} cover LLM-level threats across the model lifecycle, with the OWASP Top 10~\cite{owasp_llm} identifying prompt injection as the foremost risk. Raza et al.~\cite{raza2025trismagenticaireview}, Lazer et al.~\cite{lazer2026survey}, and Sibai et al.~\cite{sibai2026path} extend this framing to agentic systems, while Wu et al.~\cite{wu2024agent} catalog concerns in deployed systems. Zhang et al.~\cite{zhang2025llmagentsemploysecurity} argue agent systems should adopt classical principles such as least privilege, echoing Saltzer and Schroeder~\cite{saltzer1975protection}.

\textbf{Prompt injection}
Direct injection was formalized by Liu et al.~\cite{liu2025houyi}, who compromised 31 of 36 real-world LLM applications. Greshake et al.~\cite{greshake2023not} introduced \emph{indirect} injection via retrieved content, and Zou et al.~\cite{zou2024poisonedragknowledgecorruptionattacks} showed RAG is an effective poisoning vector. Yi et al.~\cite{yi2025benchmarking}, Zhan et al.~\cite{zhan2024injecagent}, and Debenedetti et al.~\cite{debenedetti2024agentdojo} provide benchmarks for tool-integrated agents. Our threat model adopts indirect injection as a minimal attacker capability; unlike prior benchmarks that measure whether a single agent can be hijacked, we study how a single injection propagates through a multi-agent network once successful.

\textbf{Multi-agent attacks}
Tian et al.~\cite{tian2023evil} examined agent-level safety, and Lee and Tiwari~\cite{lee2024prompt} demonstrated LLM-to-LLM propagation. Lupinacci et al.~\cite{lupinacci2025dark} reported 100\% compliance with malicious peer-agent requests even on models that resist identical direct injection, identifying inter-agent trust as a critical blind spot. Triedman et al.~\cite{triedman2025mas} framed related failures as the \emph{confused deputy} problem~\cite{hardy1988confused}. Yu et al.~\cite{yu2024netsafe} and He et al.~\cite{he2025red} study topological propagation, Shapira et al.~\cite{shapira2026agentschaos} document chaotic failure modes, and Lynch et al.~\cite{lynch2025agenticmisalignmentllmsinsider} examine agents as insider threats. Our work differs by focusing on framework-level rather than model-level propagation, and by formalizing each channel as a violation of an explicit structural invariant rather than treating each attack as a point finding.

\textbf{Protocol and supply-chain}
Hou et al.~\cite{hou2025modelcontextprotocolmcp} and Song et al.~\cite{song2025protocolunveilingattackvectors} survey MCP threats; Louck et al.~\cite{louck2025improvinggooglea2aprotocol} examine A2A; Li et al.~\cite{li2025lesdissonancescrosstoolharvesting} identify cross-tool harvesting. Schmotz et al.~\cite{schmotz2026skillinjectmeasuringagentvulnerability} measure vulnerability to malicious SKILL.md files. We treat successful injection as a given and study what follows once the payload is inside.

\textbf{Defenses}
Model-level defenses include Wallace et al.'s instruction hierarchy~\cite{wallace2024instruction}. Prompt-level defenses (delimiting, spotlighting) are bypassable, as systematized by Wang et al.~\cite{wang2025sokevaluatingjailbreakguardrails}. System-level defenses are the most promising direction: Debenedetti et al.'s CaMeL~\cite{debenedetti2025camel} enforces capability-based control and data flow through a custom interpreter, achieving provable security on 67\% of AgentDojo tasks. Our Agent Capability Registry applies the same PDP/PEP pattern~\cite{hu2013guide} to the role-to-resource mapping, extending capability-based defense from single-agent control flow to multi-agent authorization. Ghosh et al.~\cite{ghosh2025safety} propose a compatible broader framework, and Kang et al.~\cite{kang2025memoryosaiagent} treat memory as an OS problem without framing isolation as a security invariant. Our revision-based synchronization is inspired by optimistic concurrency control~\cite{kung1981occ} to address memory divergence in asynchronous agent networks.

\textbf{Benchmarks}
AgentDojo~\cite{debenedetti2024agentdojo}, ASB~\cite{zhang2024asb}, and AgentHarm~\cite{andriushchenko2024agentharm} evaluate single-agent adversarial robustness and harmful behavior. They do not evaluate structural properties of multi-agent networks (hierarchy integrity, inter-agent authorization, memory consistency).

%% file: text/Conclusion.tex
Overall, our study shows that designing secure agent frameworks remains a fundamental challenge. Through formal modeling and analysis, we show how popular open-source agent frameworks violate key security invariants and thereby expose exploitable vulnerabilities. Our model provides a foundation for understanding how MASs operate and offers a basis for future analysis and defense improvement. The exploitations presented in this paper further highlight the importance of proper access control and synchronization mechanisms for reducing risk in MASs.

% \subsection{Limitation and Future direction} 
The probabilistic nature of LLMs creates a major gap between the distributed systems and MASs. Future work should focus on how to make full use of the reasoning capabilities of LLMs while ensuring that security remains grounded in deterministic system design as much as possible. In other words, LLMs should operate within carefully constrained system boundaries so that their advantages can be leveraged safely. This paper focuses on theoretical modeling, defense design, and the conceptual existence of the identified vulnerabilities. We also recognize the versatility of MASs, which calls for future work on how these defenses can be adapted to different scenarios.

%% file: text/Ethical_Considerations.tex
All experiments were conducted in isolated local VM or container environments under our control. The evaluated agent frameworks were not connected to real user accounts, production services, third-party infrastructure, or sensitive datasets, and all prompts, files, and resources used in the experiments were synthetic. We did not perform unauthorized access, real data exfiltration, service disruption, or attacks against external systems. Because the findings may be dual-use, we reported the relevant issues to the affected framework maintainers before public release.